\newtheorem{theorem}{Theorem}
\newtheorem{lemma}{Lemma}
\newtheorem{corollary}{Corollary}
\newtheorem{proposition}{Proposition}
\newtheorem{remark}{Remark}
\def\cD{{\mathcal{D}}} \def\cE{{\mathcal{E}}} 
\def\cG{{\mathcal{G}}}  
 \def\cN{{\mathcal{N}}} 
\def\cS{{\mathcal{S}}}  
\def\cV{{\mathcal{V}}}
\def\bSigma{{\pmb{\Sigma}}} \def\bSigma{{\pmb{\Sigma}}}
\def\bTheta{{\pmb{\Theta}}} \def\btheta{{\pmb{\theta}}}
  \def\b0{{\pmb{0}}}
 \def\bb{{\mathbf{b}}} 
\def\bg{{\mathbf{g}}}  
 \def\bn{{\mathbf{n}}} 
 \def\bq{{\mathbf{q}}} \def\br{{\mathbf{r}}}
\def\bs{{\mathbf{s}}}  \def\bu{{\mathbf{u}}}
\def\bv{{\mathbf{v}}} \def\bw{{\mathbf{w}}} \def\bx{{\mathbf{x}}}
\def\by{{\mathbf{y}}} \def\bz{{\mathbf{z}}}
 \def\bB{{\mathbf{B}}} \def\bC{{\mathbf{C}}}
  \def\bF{{\mathbf{F}}}
 \def\bH{{\mathbf{H}}} \def\bI{{\mathbf{I}}}
 \def\bK{{\mathbf{K}}} 
\def\bP{{\mathbf{P}}}
\newcommand{\be}{\begin{equation}}
\newcommand{\ee}{\end{equation}}
\begin{document}

\title{Communicating over Filter-and-Forward Relay Networks with Channel
Output Feedback \thanks{Copyright (c) 2015 IEEE. Personal use of this
material is permitted. However, permission to use this material for any other purposes
must be obtained from the IEEE by sending a request to
pubs-permissions@ieee.org. 
The material in this paper was presented in part at
the \textit{50th Annual Allerton Conference on Communication, Control, and
Computing}, Monticello, IL, October 2012~\cite{AgLo12}. D. J. Love and V.
Balakrishnan are with the School of Electrical and Computer Engineering, Purdue
University, West Lafayette, IN 47907 USA (e-mail: djlove@ecn.purdue.edu;
ragu@ecn.purdue.edu). M. Agrawal is with WorldQuant Research (India) Private
Limited, Mumbai, MH 400076 INDIA (e-mail: agrawal.mayur@gmail.com).}}
\author{Mayur Agrawal, David J. Love, and Venkataramanan Balakrishnan}

\maketitle
{
\begin{abstract}
Relay networks aid in increasing the rate of communication  {from
source to destination}. {However, the capacity of even a
three-terminal relay channel is an open problem.} {In this work,
we propose a new lower bound for the capacity of the three-terminal relay
channel with destination-to-source feedback in the presence of noise with
memory.
Our lower bound improves on the existing bounds in the literature. } We then
extend our lower bound to general relay network configurations using an
arbitrary number of filter-and-forward relay nodes. {Such network
configurations are common in many multi-hop communication systems where the
intermediate nodes can only perform minimal processing due to limited
computational power. Simulation results show that significant improvements in
the achievable rate can be obtained through our approach}. We next derive a
coding strategy~(optimized using post processed signal-to-noise ratio as a
criterion) for the three-terminal relay channel with noisy channel output
feedback for two transmissions. This coding scheme can be used in conjunction
with open-loop codes for applications like automatic repeat request~(ARQ) or
hybrid-ARQ.
\end{abstract}

\begin{keywords}
relays, channel output feedback, correlated noise, linear coding, relay
networks, concatenated coding
\end{keywords}
}

\section{Introduction}\label{sec:intro}
The use of feedback in {communication systems} has the potential
to greatly simplify encoding and decoding processes~\cite{Sh58, ScKa66, Sc66,
CoPo89}. In the seminal work done by Schalkwijk and Kailath~(SK)
in~\cite{ScKa66, Sc66}, they proposed a capacity-achieving linear coding scheme
for a point-to-point link with noiseless channel output feedback. Despite its
simplicity, the SK scheme achieves doubly exponential decay with blocklength in
the probability of error for additive white Gaussian noise~(AWGN) channels.
Variations of the SK scheme have been shown to achieve the capacity and
{provide increased reliability} for a certain class of colored
channels as well~\cite{Bu69,Ki06, Ki10}. { In \cite{SaAg13}, the
capacity of a two-user interference channel under different channel output
feedback architectures has been studied.} {Also, it has been} shown in
\cite{PoPo11} that feedback can lead to a reduction in transmit power for the same forward rate constraint.

The proliferation of wireless systems has spurred much research on the
use of relays~(e.g., \cite{PaWa04, MuVi07, GeMu08}). 
{Recently, there has been an effort to use relays for general
multicast communication problems~\cite{GuYe13}.} Relay channels could potentially
become the fundamental building blocks for wireless systems in the future. The
concept of the three-terminal relay channel was first introduced in~\cite{Me71}.
{Since then, many coding schemes have been proposed to {exploit} the
advantages offered by the addition of a relay node over a point-to-point
link~\cite{CoGa79,ChMo07,
GaMo06}.  Even with the same total transmission power budget as a
point-to-point link, the presence of a relay node~(along with an intelligent
power allocation scheme between the source node and the relay node) can help
increase the capacity under a Gaussian channel assumption~\cite{ZaOe10, KuLa09,
ZaOe14, KuLi14}.}

{The Gaussian relay channel {utilizing} linear time
invariant relay filtering has been explored earlier in the literature~\cite{ChYo12,
GoYa13}. With no feedback, it has been shown in \cite{ChYo12}, that an
amplify-and-forward~(AF) scheme is the optimal scheme among the class of one-tap
filters. For inter-symbol interference channels, the algorithm in \cite{ChYo12}
shows significant improvements in achievable rate by jointly designing source and relay filters. The work in \cite{GoYa13} explores the design of any arbitrary
causal linear relay precoders. Considering the problem of the joint optimization
of input covariance matrix and relay precoder, they demonstrate that a
subdiagonal precoder is sufficient to achieve the maximum rate of the linear
precoded relays. At low source transmit powers, effectively, the system reduces
to a half-duplex relay, where the relay retransmits information only every alternate time slot.}


{\color{black} The Gaussian parallel relay channel was studied in \cite{ScGa00} in
which the authors proposed amplify-and-forward and decode-and-forward
strategies. The work in \cite{NiDi13} improved on the AF
scheme by proposing a bursty AF coding strategy.
Generalization of bursty AF by linear time-varying relaying is considered in
\cite{XuKi14}. Using the techniques developed in \cite{Ki10}, the authors in
\cite{XuKi14} developed an expression for the maximum achievable rate for
Gaussian parallel relay channels with only linear operations at the relay
nodes. They demonstrate that an optimal coding strategy is
time-sharing among four different AF relaying schemes.}


{The role of feedback in relay channels was initially explored in
\cite{CoGa79}. Under the feedback links available from the destination to both
the source and the relay and from relay to the source (referred to as complete
feedback), it was shown that the cut-set upper bound is achievable via
block-Markov superposition encoding schemes. In the event of the feedback link
available only from the destination to the relay node~(partial feedback),
the relay channel turns into a physically degraded one; thus again
allowing the achievability of cut-set upper bound. {The authors of \cite{KuLa11}} proposed a simple SK-type linear coding scheme to achieve rates very close to the capacity for the above
partial feedback setting. However the feedback capacity under other
partial feedback scenarios is still an open area of
research~\cite{GaBr06}.}

In this work, we look at the communication between a source and a destination
over a relay network with a {partial feedback} link available from
the destination to the source in the form of channel output feedback. The relay
nodes in the network can filter-and-forward the received signal to the next node. For the scenario involving one
{filter-and-forward} relay node in the network~(i.e., a
three-terminal relay channel) with all noises modeled as additive white Gaussian processes, a lower bound on the feedback
capacity has been proposed in \cite{GaBr06,BrWi09}.
We improve on this lower bound by employing a time-invariant finite impulse
response~(FIR) filter at the relay node {which builds on} the
recent success in characterizing the capacity of the stationary Gaussian
channels with channel output feedback for point-to-point links~\cite{Ki10}.
{While such filter-and-forward relays were unsuccessful in
improving the rates for the Gaussian relay channel without
feedback~\cite{ChYo12}, these relays prove quite successful in increasing the achievable rate for the
relay channel with a partial feedback link present between the destination and
the source.} The fundamental observation that we make is that a relay node using an
FIR filter~(without decoding the source message) can be viewed as a \emph {virtual point-to-point} link with colored noise in the feedforward part.
In fact we show numerically that in some two-tap filter cases our
lower bound capacity can be twice as high as the point-to-point communication
link capacity.

Our approach to deriving a lower bound on the feedback capacity enables us to
extend the bound to any stationary auto-regressive moving
average~(ARMA) noise process. In the process, we also suggest an alternate, but
more concise, derivation of the lower bound proposed in~\cite{BrWi09}. We then
extend the lower bound to more general relay network configurations, in
particular to the ones involving multiple amplify-and-forward relays in
parallel and series configurations.

One of the major advantages of the proposed lower bound is that it is achievable
by a generalization of the SK scheme~\cite{Ki10}, implying a very low
computational requirement at all the nodes involved, {i.e., the
source, relays, and destination}. Additionally the SK-type scheme results
in doubly exponential decay in the probability of error as a function of the blocklength used for the
transmission of the source message in the absence of feedback noise.
{Note that our proposed scheme is very different from the
interlacing structure of the precoder at the relay node in \cite{GoYa13}. In an
SK-like scheme, the relay transmits message in every time slot, instead of
working in a half duplex mode as proposed in \cite{GoYa13} for the no feedback
case.}

Moving forward, we extend the development of the three-terminal relay channel
with ideal channel output feedback to the one that involves noise in the
feedback link. {Noisy feedback is a more realistic model and has
been discussed for point-to-point channels with Gaussian noise~\cite{ZaDa11} and
quantization noise~\cite{WiMa14, WiFa14}}. While a simple expression for an
arbitrary blocklength transmission seems intractable, we develop a coding
strategy for the case of two channel uses in the presence of additive white
Gaussian noise.
{The proposed strategy is optimized to increase the post-processed
signal-to-noise-ratio~(SNR) at the destination.} We then analyze the impact that
source-to-relay and source-to-destination noise have on the overall network
performance. When specialized to the point-to-point link~(by turning off the
relay node), we recover the result in \cite{Bu69} for noisy feedback. For
practical implementation, the proposed coding scheme can be used as an inner
code in a concatenated fashion as outlined in \cite{ZaDa11}.
 {While the development in \cite{ZaDa11} is for a point-to-point
link, our scheme can be used for a link with relay nodes, thereby finding ready
usage in applications like automatic repeat request~(ARQ) and hybrid-ARQ. Our formulation also differs
from the one in \cite{ZaDa11} by forcing equal power constraint on each
channel usage as opposed to a sum power constraint over the two channel uses.}

The remainder of the paper is organized as follows. Section \ref{sec:sm}
describes the mathematical formulation for the system, assuming a network with
filter-and-forward relay nodes. {This is then specialized to a
three-terminal relay problem, followed by a mathematical reformulation that sets
up the framework for optimization.} In Section \ref{sec:NCOF}, we present a
lower bound on the feedback capacity for any arbitrary colored Gaussian noise
three-terminal relay channel. We next discuss some illustrative cases of the
general lower bound on the feedback capacity to highlight the advantages offered
by our proposed lower bound.
Section \ref{sec:NCOF_N_2} analyzes the three-terminal node for the case of
noisy channel output feedback for the blocklength size of two. The section also
analyzes the extreme cases possible for the source-to-relay and
source-to-destination noise process. We conclude with a discussion in Section
\ref{sec:conc}.

\subsection*{Notation:}

{Vectors~(matrices) are represented by lower~(upper) boldface
letters, and scalars are represented by lower italicized letters}. The
operators $(\cdot)^T, \textrm{tr}(\cdot)$, and $\lVert \cdot \rVert$ denote the
transpose, trace, and Frobenius norm of a matrix/vector, respectively. The expectation of a
random variable or matrix/vector is denoted by $E[\cdot]$. The boldface letter $\bI$
represents the $N\times N$ identity matrix, and $\cN(0,1)$ denotes the
distribution of a standard normal Gaussian random variable with zero mean and
unit variance.

We now define some representations that will be used frequently in the
subsequent presentation.

\textit{Definition 1:} A random process $\{\widetilde{z}[k]\}_{k = 1}^{\infty}$
is said to be an ARMA$(p,q)$ process if $\widetilde{z}[k]$ evolves as 
\begin{equation}\label{def_arma}
\sum_{j = 0}^{p}\beta_j \widetilde{z}[{k - j}] = \sum_{j = 0}^{q}\alpha_j
\epsilon[{k - j}],
\end{equation}
where each of the $\epsilon[i]$ is an independent and identically
distributed~(i.i.d.) Gaussian random variable with $\cN(0,1), \beta_j \in
\mathbb{R}$ for all $j$,  $\alpha_j \in \mathbb{R}$ for all $j$, and $\beta_0 =
1$.

In the event that $q = 0$ in (\ref{def_arma}), we call the resulting random
process AR($p$). Similarly if $p = 0$ in (\ref{def_arma}), the random process
is called MA($q$) process.

\textit{Definition 2:} An ARMA$(p,q)$ process in (\ref{def_arma}) can be
represented alternatively by defining the delay operator $D$ where
$D^j\widetilde{z}[{k}] = \widetilde{z}[{k - j}].$
Hence, (\ref{def_arma}) can be represented as

\begin{equation}\label{bdef_arma}
G(D)\widetilde{z}[{k}] = F(D) \epsilon[k],
\end{equation}
where $G(D)$ and $F(D)$ are the polynomials given by
\begin{equation}\label{arma_poly}
G(D) = \sum_{j = 0}^p \beta_jD^j, \quad F(D) = \sum_{j = 0}^q \alpha_jD^j.
\end{equation}

An ARMA~$(p,q)$ process $\{\widetilde{z}[k]\}_{k =
1}^{\infty}$ is \textit{said to be stable} if the zeros of $G(D)$ as defined in
(\ref{arma_poly}) lie strictly outside the unit circle.


\textit{Definition 3:} An ARMA~$(p,q)$ process $\{\widetilde{z}[k]\}_{k =
1}^{\infty}$ can be represented using a state space model as \cite{YaKa04}
\begin{subequations}\label{ss_n}
\begin{align}
\bb[{k + 1}] & = \bP\bb[{k}] + \bq\epsilon[k]\\
\widetilde{z}[k] & = \alpha_0\br^T\bb[k] + \alpha_0\epsilon[k], 
\end{align}
\end{subequations}
where $\bb[k] \in \mathbb{R}^{d
\times 1}$ with $d = \max(p,q),$ and the matrices $\bP, \bq$, and $\br$ are
given by
\begin{align*}
\bP & = \left[\begin{array}{cccc}
-\beta_1 & -\beta_2 & \ldots & -\beta_d\\
1 & 0 & \ldots & 0\\
0 & 1 & \ldots & 0\\
\vdots & \vdots & \ddots & \vdots\\
0 & 0 & \ldots & 0
\end{array}\right], \quad \bq = \left[\begin{array}{c}
1\\0\\ 0\\\vdots\\0\end{array}\right],\\
\br &  = \left[\left(\frac{\alpha_1}{\alpha_0} - \beta_1\right), 
\left(\frac{\alpha_2}{\alpha_0} - \beta_2\right),\ldots,
\left(\frac{\alpha_d}{\alpha_0} - \beta_d\right)\right]^T.
\end{align*}


\section{Mathematical Formulation}\label{sec:sm}
 
\subsection{Network Model}
Consider a real-valued discrete time model as shown in
Figure~\ref{fig:nm}.
In this setup, we have a source node $\cS$, a destination node $\cD$, and a relay
network denoted by the directed acyclic graph, $\cG = (\cV,\cE)$. The set $\cV$
contains all the $|\cV|$ nodes in the network, i.e., $\cV = \{\mathsf{v}_1,
\mathsf{v}_2, \ldots, \mathsf{v}_{|\cV|}\}$, while $\cE$ contains the directed
edges of all the connected node pairs, i.e., $\cE = \{(\mathsf{v}_1,
\mathsf{v}_2), (\mathsf{v}_1, \mathsf{v}_6), (\mathsf{v}_5,
\mathsf{v}_1), \ldots \}$. Note that the pair $(\mathsf{v}_5,
\mathsf{v}_1)$ denotes a directed edge from the relay node $\mathsf{v}_5$ to the
relay node $\mathsf{v}_1$.

\begin{figure}
\centering
\includegraphics[scale = 0.40]{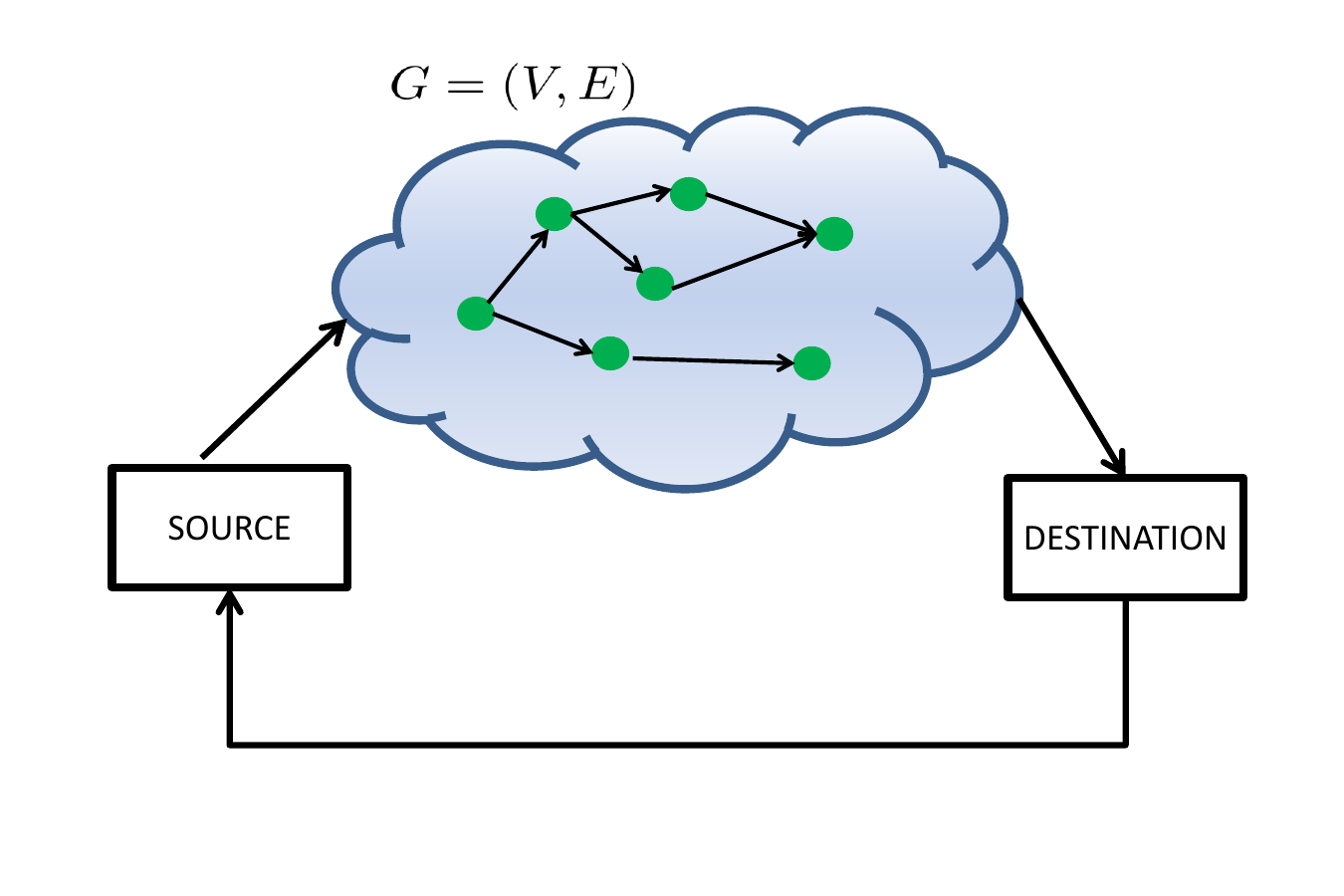}
\caption{System model for network with filter-and-forward relay nodes.}
\label{fig:nm}
\end{figure} 

The signals transmitted by the source $\cS$ and received by the destination
$\cD$ at time instant $k$ are denoted by $x[k]$ and $y[k]$ respectively. The
source communicates the message symbol ${\btheta}$
to the destination $\cD$ over $N + L$ channel uses. Whereas, the first $N$
channel outputs correspond to channel uses involving a message being sent, the
last $L$ outputs are due to the memory of the network.

The symbol $\btheta$ is
drawn uniformly from a symbol constellation of $M$ symbols denoted by $\bTheta =
\{\btheta_1, \ldots, \btheta_M\}$. Without loss of generality~(WLOG), we put a
norm constraint on the symbol constellation, $E[\lVert \btheta\rVert^2] = 1$. The
average power constraint at the source is specified by
\begin{equation}
\frac{1}{N + L}\sum_{k = 1}^{N + L}E[x^2[k]] \leq \rho,
\end{equation}
with $x[N+1] = \cdots = x[N + L] = 0$.

In our network model, we assume that the relay nodes do not have the
computational resources to decode the information transmitted to them by either
the source or the other relay node. The node $\mathsf{v}_i$ can only linearly
combine the previous $L_i$ received signals. In other words, the relay node
implements an $L_i$-tap time invariant FIR filter whose output at the time
instant $k$ is given by
\begin{equation}
v_i[k] = \sum_{\ell = 1}^{L_i}h_i[\ell] u_i[k - \ell],  
\end{equation}
where $\{h_i[\ell]\}_{\ell = 1}^{L_i}$ are the coefficients of the FIR filter
at the relay, $u_i[k]$ is the input to the relay node $\mathsf{v}_i$ at time
$k$, and $v_i[k]$ is the output at time instant $k$. Also, we specify the
additional power constraint at the relay node $\mathsf{v}_i$ by
\begin{equation}\label{eq:cr}
\frac{1}{N + L}\sum_{k = 1}^{N + L}E[v_i^2[k]] \leq \gamma_i \rho, \quad
\gamma_i > 0.
\end{equation} 

Note that we define the memory of the network $L$ by the following expression
\begin{equation}
L = \max_{\textrm {All paths~} \cS \rightarrow \cD} \left(\sum_{i: v_i \in
{~\textrm{a path}~} \cS \rightarrow \cD} L_i\right).
\end{equation}

\noindent In other words, $L$ is the total delay of the impulse response of
the system composed of all the relay nodes (i.e., source to destination).
{The above definition of $L$ allows us to view the
complete network as an effective relay node. This characterization will be
especially useful in the next sub-section where we investigate our original
network problem using an effective three-terminal relay problem.}

Furthermore, the input to the relay node $\mathsf{v}_i$ at time instant $k$ is
given by
\begin{equation}
u_i[k] = \sum_{j:(\mathsf{v}_j,\mathsf{v}_i )\in \cE}v_j[k] + w_i[k],
\end{equation}
where $w_i[k]$ is an additive ARMA~($p_i, q_i$) Gaussian noise with $\cN
(0,\sigma_i^2)$.
It is further assumed that there is a unit delay noiseless feedback link
available from the destination to the source. In other words, when designing $x[k]$, the
source has access to all the previous outputs $\{y[1], \ldots, y[k - 1]\}$. 

Given this arbitrary network of relays with a feedback we consider the following
questions:
\begin{enumerate}
  \item  How do the source and destination perform encoding and decoding to
  exploit the feedback link available between them?
  \item For a fixed network relay (i.e., the FIR filter at each node is
  predetermined), what is the best possible performance that can be achieved?
  \item If we are given the flexibility to even design the filters at each
  node, how can we improve the achievable rate for the network?
\end{enumerate}
  
To address the first question, we make use of linear coding at both the source
and the destination as envisioned by the SK scheme. Furthermore, it will be
shown that for a fixed network relay, we can replace the complete relay network by an
equivalent FIR filter node. In addition we will demonstrate that in the event
that we have the full flexibility to design coefficients at the relay node, it
may not always be optimal to consume the total power available at the
relay.

\begin{figure*}
\centering
\includegraphics[scale = 0.50]{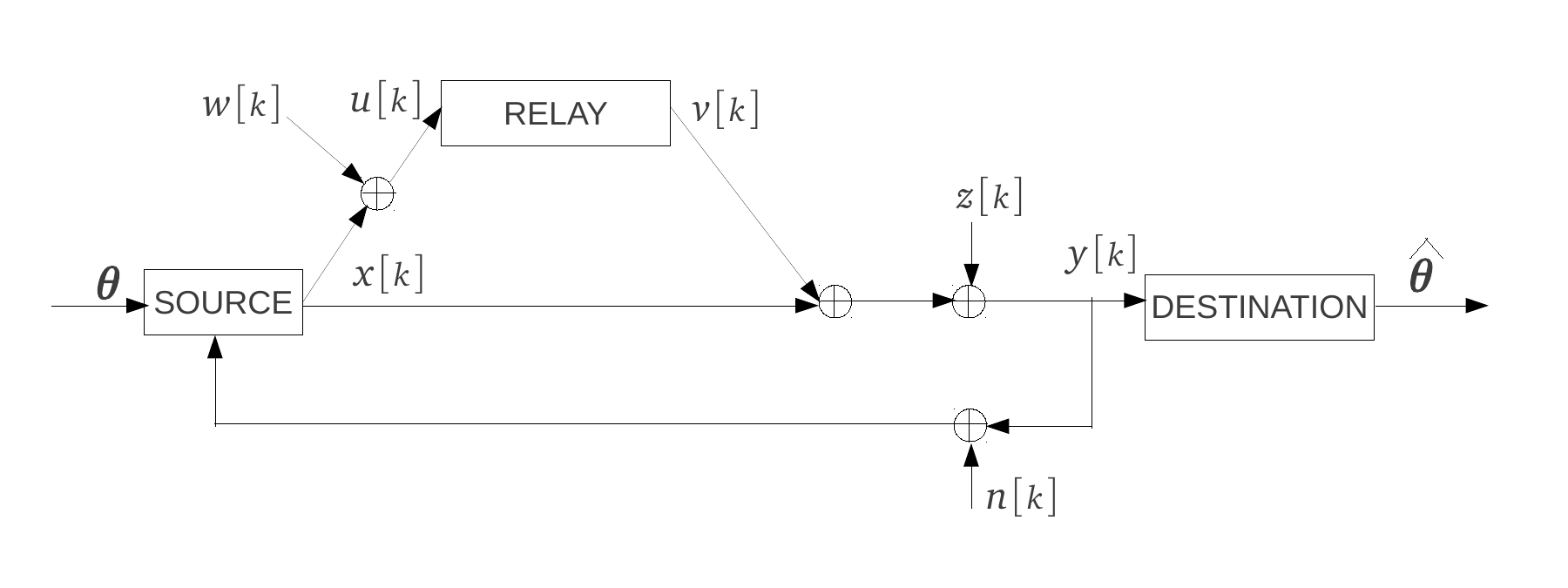} 
\caption{System model for a three-terminal relay channel with channel output
feedback link from the destination to the source.}
\label{fig:sm}
\end{figure*}

\subsection{Three-Terminal Relay}
In this subsection, we consider the case when the FIR filters at the relay
nodes are fixed. {As only fixed linear operations are performed
at each network node, the complete network can be viewed as an effective
single relay node with FIR filter $\{h[\ell]\}_{\ell = 1}^L$.}
Therefore, we begin by exploring the problem of designing a coding scheme for
a three-terminal relay channel. Consider a real discrete-time
three-terminal relay channel with a source node, relay node, and destination
node as depicted in Figure \ref{fig:sm}. The signal received by the destination at the time instant
$k$ is given by
\begin{equation}
y[k] = x[k] + v[k] + z[k], \quad k = 1, \ldots, N + L,
\end{equation}
where $x[k]$ and $v[k]$ are the $k^{th}$ transmitted signals from the source and
the relay, respectively, $z[k]$ is the additive Gaussian noise with
distribution $\cN (0,1)$, and $N + L$ denotes the blocklength for the
transmission of the message symbol ${\btheta}$. Furthermore, the received signal
$u[k]$ at the relay node is given by
\begin{equation}
u[k] = x[k] + w[k], \quad k = 1, \ldots, N + L,
\end{equation}
where $w[k]$ is distributed as $\cN(0, \sigma_w^2)$ where the relay node is
assumed to have $L$ taps.

\subsection{Problem Reformulation}
We can express the received signal $y[k]$ at the destination in terms of the
transmitted signal $\{x[i]\}_{i = 1}^{k}$ and the noise processes $\{w[i]\}_{i =
1}^{k}$ and $\{z[i]\}_{i = 1}^{k}$ as
\begin{align*}
y[k] & = x[k] + v[k] + z[k]\\
& = x[k] + \sum_{i = 1}^{L}h[i]x[k - i] + \sum_{i = 1}^{L}h[i]w[k - i]
+ z[k]. \end{align*}
Neglecting the last $L$ channel uses that are performed to flush the memory
from the network, we can express the first $N$ signal received at the
destination in the vector form as
\begin{align}
\nonumber \!\!\!\!\left[ \begin{array}{c}
y[1] \\
\vdots \\ \vdots \\
\vdots\\
y[N]
\end{array}\right] = & \left[ \begin{array}{cccccc}
1 & 0 & 0 & \ldots & \ldots & \ldots\\
h[1] & 1 & 0 & \ldots & \ldots & \ldots\\
\vdots & \vdots & \ddots & \ddots & \vdots & \vdots \\
\vdots & \vdots & \ddots & \ddots & \vdots & \vdots \\
0 & \ldots & h[{L}] & h[{L - 1}] & \ldots & 1\\
\end{array}\right]
\left[ \begin{array}{c}
x[1] \\
\vdots \\ \vdots \\
\vdots\\
x[N]
\end{array}\right]  \\
\nonumber&
+\underbrace{\left[ \begin{array}{ccccc}
0 & 0 & 0 & \ldots & \ldots\\
h[1] & 0 & 0 & \ldots & \ldots\\
\vdots & \vdots & \ddots & \vdots & \vdots \\
\vdots & \vdots & \ddots & \vdots & \vdots \\
\ldots & h[{L}] & h[{L - 1}] & \ldots & 0\\
\end{array}\right]}_{\bH}
\left[ \begin{array}{c}
w[1] \\
\vdots \\ \vdots \\
\vdots\\
w[N]
\end{array}\right] \\
&+ \left[\begin{array}{c}
z[1] \\
\vdots \\ \vdots \\
\vdots\\
z[N] \label{def_H}
\end{array}\right].
\end{align}
In other words,
\begin{equation}\label{eq:me}
\by = (\bI + \bH) \bx + \bH\bw + \bz,
\end{equation}
where $\by = \left[y[1], y[2], \ldots, y[N]\right]^T,$ and the definition of
other vectors similarly follows. Hence the input and the output signal vectors at the
relay are given by
\begin{equation}\label{eq:ro}
\bu  = \bx + \bw, \quad \bv = \bH\bu = \bH\left(\bx + \bw\right).
\end{equation}
{Because the lower triangular matrix $\left(\bI + \bH \right)$ has ones along
its principal diagonal, it is invertible}. Furthermore, the inverse of a
lower triangular matrix is also lower triangular~\cite{GoVa96}. This allows us
to perform \textit{causal} linear processing in (\ref{eq:me}) to obtain
\begin{equation}
\widetilde{\by} = \bx + (\bI + \bH)^{-1}\bH\bw + \left(\bI + \bH\right)^{-1}\bz,
\end{equation}
where  $\widetilde{\by} = \left(\bI + \bH\right)^{-1}\by$. By causal processing,
we mean that the entry $\widetilde{y}[k]$ is only a deterministic function of the values
$\{y[1], \ldots, y[k]\}$. Let us define the effective noise as
\begin{equation}\label{def_n}
\widetilde{\bz} \stackrel{\Delta}{=} (\bI + \bH)^{-1}\bH\bw + \left(\bI +
\bH\right)^{-1}\bz.
\end{equation}
The covariance of the noise vector $\widetilde{\bz}$ is given by
\begin{equation}\label{pr:covar_n}
\bK_{\widetilde{z}} = (\bI + \bH)^{-1}\bH\bK_w \bH^T(\bI + \bH^T)^{-1} +
(\bI + \bH)^{-1}\bK_z(\bI + \bH^T)^{-1},
\end{equation}
where $\bK_w = E[\bw \bw^T]$ and $\bK_z = E[\bz \bz^T]$.
With the effective noise $\widetilde{\bz}$, the processed signal at the
destination can be written as
\begin{equation}\label{eq:sm1}
\widetilde{\by} = \bx + \widetilde{\bz},
\end{equation}
where $\bx$ is the signal of interest and $\widetilde{\bz}$ is the additive
colored Gaussian noise. The signal $\bx$ has the power constraint
\begin{equation}\label{eq:sm2}
\frac{1}{N + L}E[\bx^T \bx] \leq \rho.
\end{equation}

In the event that we also have the flexibility in designing the relay node
coefficients, we need to satisfy an additional power constraint in
(\ref{eq:cr}) at the relay node given by
\begin{equation}\label{eq:sm3}
\frac{1}{N + L}\textrm{tr}\left(E\left[\bv \bv^T\right]\right) \leq \gamma \rho.
\end{equation}
Substituting the value of $\bv$ from (\ref{eq:ro}), the constraint can be
re-written in terms of the input vector $\bx$ as
\begin{equation}\label{eq:sm4}
\frac{1}{N + L}\textrm{tr}\left( \bH E\left[(\bx + \bw)(\bx +
\bw)^T\right]\bH^T\right) \leq \gamma \rho.
\end{equation}

Note that the presence of the feedback link with a unit delay ensures that the
source has access to the side-information from the destination. In
particular, we assume that the side-information is a noise corrupted version
of the received signal $y[k]$. This additional side information means that when
designing $x[k]$, the source has access to the previous
corrupted outputs $\{y[i] + n[i]\}_{i=1}^{k-1}$, where $n[i]$ has distribution
$\cN(0,\sigma_n^2)$.
Also, the noise processes $\{w[k]\}_{k = 1}^{N}$, $\{z[k]\}_{k = 1}^{N}$,
and $\{n[k]\}_{k = 1}^{N}$ are assumed to be independent of each
other. 


\section{Noiseless Channel Output Feedback for a Three-terminal
Relay}\label{sec:NCOF} 
In this section, we look at the ideal case of noiseless
channel output feedback, i.e., $\sigma_n^2 = 0$. With the noiseless channel output information 
at the source, the source has perfect knowledge of the estimate of the original
message $\btheta$ at the destination.
\subsection{$(N, L)$-block Feedback Capacity Optimization}

The formulation in (\ref{eq:sm1}), (\ref{eq:sm2}), and (\ref{eq:sm3}) of the
original three-terminal relay channel is similar to the point-to-point
communication link with the feedback link as discussed in~\cite{CoPo89}, but
with an additional power constraint at the relay given by (\ref{eq:sm3}). Using
the generalized notion of capacity as described in \cite{CoPo89}, we define
\begin{equation}{\label{eq:nlcapacity1}}
C_{\rm{FB}, N, L} = \sup_{\bK_x, \bH}\frac{1}{2(N + L)}\log\frac{\det
\bK_{x + \widetilde{z}}}{\det\bK_{\widetilde{z}}},
\end{equation}
where the above maximization needs to satisfy the power constraints as outlined
in  (\ref{eq:sm2}), and (\ref{eq:sm3}). The above quantity can be
thought of as the capacity of the channel if it is used for a block
of length $N + L$. 

Note that the notion of $(N, L)-$block feedback capacity~($C_{\rm{FB}, N, L}$)
in (\ref{eq:nlcapacity1}) for the original three-terminal relay channel holds for
any time varying noise process. However the limiting feedback capacity for any
arbitrary noise process may not exist. In the event that the limit exists,
we define the feedback capacity as
\begin{equation}\label{eq:CfbOpt}
C_{\rm{FB}, L} = \lim_{N \rightarrow \infty}C_{\rm{FB}, N, L}.
\end{equation} 

\begin{lemma}\label{NLOpt}
The $(N, L)-$block feedback capacity optimization for the $L-$tap three-terminal
relay is given by
\begin{subequations}\label{eq:lb_opt1}
\begin{align}
\!\!\!C_{\rm{FB}, N, L} & = \sup_{\bK_s, \bB,
\bH}\frac{1}{2(N + L)}\log\frac{\det\left(\bK_s + (\bI +
\bB)\bK_{\widetilde{z}}(\bI + \bB)^T\right)}{\det(\bK_{\widetilde{z}})} \\
\nonumber \!\!\!\textrm{such that} \\
& \!\!\!\!\!\quad \textrm{tr}(\bK_s
+ \bB\bK_{\widetilde{z}} \bB^T) \leq (N + L)\rho,\\ 
\nonumber \!\!\!\!\! & \textrm{tr}\left( \bH(\bK_s +
\bB\bK_{\widetilde{z}} \bB^T + \bB\left(\bI + \bH\right)^{-1}\bH\bK_w \right. \\
& \!\!\!\!\!\quad \left. + (\bB\left(\bI + \bH\right)^{-1}\bH\bK_w)^T +
\bK_w)\bH^T\right) \leq \gamma (N + L)\rho.
\end{align}
\end{subequations}
where the maximization is performed over all positive semidefinite
symmetric matrices $\bK_s$, all strictly lower triangular matrices $\bB$, and
all strictly lower triangular $L-$banded Toeplitz matrices $\bH$ (see
(\ref{def_H})). 
\end{lemma}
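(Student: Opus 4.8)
The plan is to recognize that, once the causal whitening of~(\ref{eq:sm1}) has been performed, the three-terminal relay collapses to an ordinary point-to-point Gaussian channel $\widetilde{\by} = \bx + \widetilde{\bz}$ equipped with noiseless unit-delay feedback, so that the Cover--Pombra characterization of the $n$-block feedback capacity~\cite{CoPo89} applies verbatim. The first step is therefore to invoke the fact that, for a Gaussian-noise channel with strictly-causal feedback, the supremum defining~(\ref{eq:nlcapacity1}) is attained by a jointly Gaussian input, and that every such input admits the representation $\bx = \bB\widetilde{\bz} + \bs$, where $\bB$ is \emph{strictly} lower triangular and $\bs \sim \cN(\b0,\bK_s)$ is independent of $\widetilde{\bz}$. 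Strict lower triangularity is forced by the unit feedback delay together with the lower-triangularity of $\bH^{-1}$: the source can reconstruct $\{\widetilde{z}[1],\ldots,\widetilde{z}[k-1]\}$ from $\{y[1],\ldots,y[k-1]\}$ and its own past transmissions, so $x[k]$ may depend on the past innovations but not on $\widetilde{z}[k]$. This substitution is the conceptual crux, since it converts the abstract optimization over feedback strategies into a concrete optimization over the pair $(\bB,\bK_s)$, with the induced map onto the feasible set, so that the two suprema coincide.

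Next I would carry out the covariance algebra for the objective. Writing the processed output as $\widetilde{\by} = \bx + \widetilde{\bz} = (\bI+\bB)\widetilde{\bz} + \bs$ and using independence of $\bs$ and $\widetilde{\bz}$ gives $\bK_{x+\widetilde{z}} = (\bI+\bB)\bK_{\widetilde{z}}(\bI+\bB)^T + \bK_s$, which is exactly the numerator appearing in~(\ref{eq:lb_opt1}); the denominator $\det\bK_{\widetilde{z}}$ is left untouched, and $\bK_{\widetilde{z}}\succ\b0$ because the $\bH^{-1}\bK_z(\bH^{-1})^T$ contribution in~(\ref{pr:covar_n}) is positive definite, so the log-det ratio is well defined. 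The source power constraint~(\ref{eq:sm2}) is then immediate: again by independence, $\bK_x = \bB\bK_{\widetilde{z}}\bB^T + \bK_s$, which gives $\textrm{tr}(\bK_s + \bB\bK_{\widetilde{z}}\bB^T) \le (N+L)\rho$.

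The final step is to rewrite the relay constraint~(\ref{eq:sm3}) in the new variables, and this is the step I expect to be the main obstacle, because under the feedback representation the input $\bx$ becomes \emph{correlated} with the relay noise $\bw$ through $\widetilde{\bz}$. Expanding $E[(\bx+\bw)(\bx+\bw)^T]$ with $\bx = \bB\widetilde{\bz}+\bs$ produces, beyond the diagonal blocks $\bB\bK_{\widetilde{z}}\bB^T$, $\bK_s$, and $\bK_w$, the cross term $\bB\,E[\widetilde{\bz}\bw^T] + (\bB\,E[\widetilde{\bz}\bw^T])^T$. Using the definition~(\ref{def_n}) and the independence of $\bz$ and $\bw$ one obtains $E[\widetilde{\bz}\bw^T] = (\bI-\bH^{-1})\bK_w$, which reproduces precisely the terms $\bB(\bI-\bH^{-1})\bK_w + (\bB(\bI-\bH^{-1})\bK_w)^T$ sitting inside the trace of the relay constraint in~(\ref{eq:lb_opt1}); substituting this second moment into~(\ref{eq:sm3}) closes the identification. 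To finish, I would confirm that as $(\bB,\bK_s)$ ranges over all strictly lower triangular matrices and all symmetric positive semidefinite matrices the induced feasible region matches that of the original problem, so that passing to the supremum preserves equality, and note that the joint maximization over $\bH$ carries through unchanged since $\bK_{\widetilde{z}}$ and $\bK_w$ enter only via~(\ref{pr:covar_n}).
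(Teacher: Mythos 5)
Your proposal is correct and follows essentially the same route as the paper: invoke the Cover--Pombra representation $\bx = \bs + \bB\widetilde{\bz}$ with $\bB$ strictly lower triangular, compute $\bK_{\widetilde{y}} = \bK_s + (\bI+\bB)\bK_{\widetilde{z}}(\bI+\bB)^T$ and $\bK_x = \bK_s + \bB\bK_{\widetilde{z}}\bB^T$, and substitute into the two power constraints. The only difference is one of explicitness: the paper compresses the relay-constraint substitution into ``we immediately get the result,'' whereas you correctly work out the cross-correlation $E[\widetilde{\bz}\bw^T] = (\bI - \bH^{-1})\bK_w$ that produces the terms $\bB(\bI-\bH^{-1})\bK_w + (\bB(\bI-\bH^{-1})\bK_w)^T$ in the lemma, which is the one genuinely nontrivial piece of algebra in the argument.
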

\begin{proof}
WLOG assume that we are provided with the set of
filter taps, i.e., $\bH$. In the presence of effective noise $\widetilde{\bz}$
given by (\ref{pr:covar_n}) and a noiseless feedback link, it has been shown in
\cite{CoPo89} that the optimal input signal is given by
\begin{equation}
\bx = \bs + \bB\widetilde{\bz},
\end{equation}
where $\bs$ is a signal vector dependent on just the message
$\btheta$ and $\bB$ is a strictly lower triangular matrix to enforce 
causality at the source. As a result, the received signal $\widetilde{\by}$ can
be written as
\begin{equation}
\widetilde{\by} = \bx + \widetilde{\bz} = \bs + (\bI + \bB)\widetilde{\bz}.
\end{equation}
Clearly, $\bK_{\widetilde{y}} = \bK_s + (\bI +
\bB){\bK}_{\widetilde{z}}(\bI +
\bB)^T$ and $\bK_{x} = \bK_s + \bB{\bK}_{\widetilde{z}}\bB^T$.

The $(N,L)-$ block feedback capacity~\cite{CoPo89} can then be expressed
as
\begin{equation}
C_{\rm{FB}, \bH, N, L} = \sup_{\bK_s, \bB
}\frac{1}{2(N + L)}\log\frac{\det
\bK_{\widetilde{y}}}{\det\bK_{\widetilde{z}}}.
\end{equation}
Substituting the value of $\bK_{x}$ in (\ref{eq:sm2}) and (\ref{eq:sm3}), we
immediately get the result of the lemma.
\end{proof}

In the following proposition, we show that for a given set of filter
taps~(i.e., $\bH$ is fixed), the above $(N,L)- $block feedback capacity
optimization for the three-terminal relay problem can be cast as a convex
optimization problem, thereby leading to numerically computable solutions.

\begin{proposition}\label{lemma_convex}
For any given FIR filter at the relay, $\{h[\ell]\}_{\ell = 1}^{L}$, the
optimization in (\ref{eq:lb_opt1}) is convex.
\end{proposition}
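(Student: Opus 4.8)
The plan is to fix the filter $\bH$ (so that $\bK_{\widetilde{z}}$ from (\ref{pr:covar_n}) and $\bK_w$ become constant data of the problem) and then remove the nonconvexity of (\ref{eq:lb_opt1}) by a change of optimization variables: I trade the pair $(\bK_s,\bB)$ for the pair $(\bK_{\widetilde{y}},\bB)$, where $\bK_{\widetilde{y}} \stackrel{\Delta}{=} \bK_s + (\bI + \bB)\bK_{\widetilde{z}}(\bI + \bB)^T$ is exactly the output covariance appearing in the objective. Since $z[k]\sim\cN(0,1)$ gives $\bK_z=\bI$ and $\bH$ is invertible, the summand $\bH^{-1}(\bH^{-1})^T$ forces $\bK_{\widetilde{z}}\succ 0$; this guarantees that $(\bK_s,\bB)\mapsto(\bK_{\widetilde{y}},\bB)$ is an affine bijection between the original feasible set and its image (with inverse $\bK_s = \bK_{\widetilde{y}} - (\bI+\bB)\bK_{\widetilde{z}}(\bI+\bB)^T$), so optimizing over either description yields the same supremum.

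Under this substitution the objective $\tfrac{1}{2(N+L)}\log\frac{\det\bK_{\widetilde{y}}}{\det\bK_{\widetilde{z}}}$ becomes $\tfrac{1}{2(N+L)}\bigl(\log\det\bK_{\widetilde{y}} - \log\det\bK_{\widetilde{z}}\bigr)$; the second term is constant and $\log\det(\cdot)$ is concave on the positive-definite cone, so the objective is concave in the new variables. It then remains to check that every constraint defines a convex set in $(\bK_{\widetilde{y}},\bB)$. The implicit constraint $\bK_s\succeq 0$ reads $\bK_{\widetilde{y}} - (\bI+\bB)\bK_{\widetilde{z}}(\bI+\bB)^T \succeq 0$, which by a Schur-complement argument (valid because $\bK_{\widetilde{z}}\succ 0$) is equivalent to the linear matrix inequality
\[
\begin{bmatrix} \bK_{\widetilde{y}} & \bI+\bB \\ (\bI+\bB)^T & \bK_{\widetilde{z}}^{-1}\end{bmatrix}\succeq 0,
\]
whose entries are affine in $(\bK_{\widetilde{y}},\bB)$ and which therefore carves out a convex set.

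For the power constraints the decisive observation is that
\[
\bK_x = \bK_s + \bB\bK_{\widetilde{z}}\bB^T = \bK_{\widetilde{y}} - \bK_{\widetilde{z}} - \bB\bK_{\widetilde{z}} - \bK_{\widetilde{z}}\bB^T,
\]
in which the troublesome quadratic term $\bB\bK_{\widetilde{z}}\bB^T$ has cancelled, leaving $\bK_x$ \emph{affine} in the new variables. Consequently the source constraint $\textrm{tr}(\bK_x)\leq(N+L)\rho$ is affine; and the relay constraint, whose bracketed matrix equals $\bK_x$ plus terms linear in $\bB$ plus the constant $\bK_w$, all conjugated by the fixed matrix $(\bH-\bI)$ and then traced, is affine as well. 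The strict lower-triangularity of $\bB$ and the symmetry of $\bK_{\widetilde{y}}$ are linear subspace constraints. Maximizing a concave objective over the intersection of these convex sets is a convex program, which is the claim.

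The step I expect to be the crux is precisely this change of variables together with the cancellation displayed above: rewriting $\bK_x$ in the coordinates $(\bK_{\widetilde{y}},\bB)$ is what converts the genuinely nonconvex quadratic dependence on $\bB$ in the original form of (\ref{eq:lb_opt1}) into affine constraints, and one must also confirm — using $\bK_{\widetilde{z}}\succ 0$ — that the substitution is a bijection so that neither feasible points nor optimal value are altered. The remaining ingredients, namely concavity of $\log\det$ and the Schur-complement reformulation of $\bK_s\succeq 0$, are standard and routine to verify.
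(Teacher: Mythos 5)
Your proof is correct and follows essentially the same route as the paper: the paper's own argument introduces the identical change of variables $\bK_y = \bK_s + (\bI + \bB)\bK_{\widetilde{z}}(\bI + \bB)^T$ (credited to Vandenberghe--Boyd), exploits the same cancellation of the quadratic term $\bB\bK_{\widetilde{z}}\bB^T$ to make the trace constraints affine, and encodes $\bK_s \succeq 0$ via the same Schur-complement LMI. Your write-up merely makes explicit a few details the paper leaves implicit (positive definiteness of $\bK_{\widetilde{z}}$ and the bijectivity of the substitution), which is fine.
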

\begin{proof}
For any given matrix $\bH$, it is obvious that the covariance of the noise
vector $\widetilde{\bz}, \bK_{\widetilde{z}}$ in (\ref{pr:covar_n}) is constant.
Introducing the new variable $\bK_y = \bK_s + (\bI + \bB)\bK_{\widetilde{z}}(\bI
+ \bB)^T$ as in \cite{VaBo98}, we obtain the new equivalent optimization problem as
\begin{subequations}\label{pr:opt2}
\begin{align}
\max_{\bK_y, \bB} & \quad \log \det \bK_y \\
\nonumber \textrm{such that}\\
& \!\!\!\!\!\!\!\!\!\!\!\!\!\!\!\! \textrm{tr}(\bK_y -
\bB\bK_{\widetilde{z}} - \bK_{\widetilde{z}}\bB^T - \bK_{\widetilde{z}}) \leq (N + L - 1)\rho,\\ 
\nonumber & \!\!\!\!\!\!\!\!\!\!\!\!\!\!\!\!\textrm{tr}( \bH(\bK_y -
\bB\bK_{\widetilde{z}} - \bK_{\widetilde{z}}\bB^T - \bK_{\widetilde{z}} + \bB\left(\bI + \bH\right)^{-1}\bH\bK_w \\
& \!\!\!\!\!\!\!\!\!\!\!\!\!\!\!\!+ (\bB\left(\bI + \bH\right)^{-1}\bH\bK_w)^T +
\bK_w)\bH^T) \leq \gamma (N + L - 1)\rho,\\
& \!\!\!\!\!\!\!\!\!\!\!\!\!\!\!\!\left[\begin{array}{cc}
\bK_y & (\bI + \bB)\\
(\bI + \bB)^T & \bK_{\widetilde{z}}^{-1}\end{array}
\right] \succeq \mathbf{0},
\end{align}
\end{subequations}
which is in fact an instance of a convex optimization problem~\cite{BoVa06}.
\end{proof}

\subsection{A Lower Bound on Capacity}
As noted earlier, a limiting capacity expression in (\ref{eq:CfbOpt}) may not
exist for an arbitrary noise process. Therefore to derive a lower bound on the
limiting capacity expression, we focus on the channels corrupted by
\textit{stationary ergodic Gaussian noise}. Recently in \cite{Ki10}, the
limiting capacity of the forward channel with noiseless feedback in the presence
of stationary Gaussian noise for point-to-point links has been derived.
However, before we proceed, the following proposition that links the order of
the effective noise process $\{\widetilde{z}[k]\}_{k = 1}^{\infty}$ with that of
noise processes \{$w[k]\}_{k = 1}^{\infty}$ and \{$z[k]\}_{k = 1}^{\infty}$ will
be helpful in deriving the expression for the lower bound on the problem posed
previously. The importance of the proposition lies in the observation that the
order of the noise process $\{\widetilde{z}[k]\}_{k = 1}^{\infty}$ is
independent of the number of channel uses $N$.

\begin{proposition}
\label{prop1}
If $\{w[k]\}_{k = 1}^{\infty}$ is an ARMA($p_1, q_1$)
process, $\{z[k]\}_{k = 1}^{\infty}$ an ARMA($p_2,
q_2$) process, and the relay has $L$ taps, then the effective noise process
{\color{black}$\{\widetilde{z}[k]\}_{k = \max(p,q)}^{\infty}$} as defined in
(\ref{def_n}) is also an ARMA($p, q$) process with
\begin{align*}
p & \leq L + p_1 + p_2 ,\\
q & \leq \max(L + p_2 + q_1 - 1, p_1 + q_2).
\end{align*}
{\color{black} To ensure the consistency for all the valid values of $k$, we
assume the initial conditions to be $w[k] = z[k] = \widetilde{z}[k] = 0$ for all $k \leq 0$.
Note that the values for $k \leq \max(p,q)$ introduce non-stationarity due to
the initial conditions that define the random processes $\left\{{z}[k]\right\}_{k =
1}^{\infty}, \left\{{w}[k]\right\}_{k = 1}^{\infty}$ and
$\left\{\widetilde{z}[k]\right\}_{k = 1}^{\infty}$.}
\end{proposition}
\begin{proof}
It can be shown that the inverse of banded Toeplitz matrix
$(\bI + \bH)$ is given by
\begin{equation*}
(\bI + \bH)^{-1} = \left[ \begin{array}{ccccc}
1 & 0 & 0 & \ldots & \ldots\\
a_1 & 1 & 0 & \ldots & \ldots\\
a_2 & a_1 & 1 & 0 & \ldots\\
\vdots & \vdots & \ddots & \vdots & \vdots \\
a_{N - 1} & a_{N - 2} & \ldots & \ldots & 1\\
\end{array}\right],
\end{equation*}
where 
\begin{equation}\label{recur_a}
a_k + \sum_{i = 1}^{L}a_{k - i}{h[i]} = 0, \quad k = 1, \ldots, N - 1,
\end{equation} with the initial
conditions given by  $a_0 = 1, a_{-1} = a_{-2} = \cdots = a_{-(L - 1)} = 0$. 

The $k^{th}$ element of the vector $\widetilde{\bz}$ in (\ref{def_n}) is given
by
\begin{align}
\nonumber \widetilde{z}[k] & = -\sum_{i = 1}^{k -1}a_{k - i}w[i] + \sum_{i =
1}^{k}a_{k - i}z[i]\\
\label{eq:def_n2} & = \sum_{i = 1}^{k -1}a_{k - i}(z[i] - w[i]) + z[k].
\end{align}
Using the definition of $a_k$ from (\ref{recur_a}), we have
\begin{align}
\nonumber \widetilde{z}[k] & = \sum_{i = 1}^{k -1}\left(- \sum_{j = 1}^{L}a_{k -
i - j}h[j]\right)\left(z[i] - w[i]\right)  + z[k]\\
\nonumber & = -\sum_{j = 1}^{L} h[j]\left(\sum_{i = 1}^{k - 1}a_{(k - j) -
i}\left(z[i] - w[i]\right)\right) + z[k]\\
\label{eq:recur_b} & {\color{black} = -\sum_{j = 1}^{L} h[j]\left(\sum_{i = 1}^{k
- j}a_{(k - j) - i}\left(z[i] - w[i]\right)\right) + z[k].}
 \end{align}
 {\color{black}However, using (\ref{eq:def_n2}) and the initial condition that
 $a_0 = 1$, we immediately get
 \begin{equation}
 \label{eq:recur_c}
 \sum_{i = 1}^{k - j}a_{(k - j) -
 i}\left(z[i] - w[i]\right) =  \widetilde{z}[k] - w[k - j].
 \end{equation}
Substituting the value in (\ref{eq:recur_c}) into 
(\ref{eq:recur_b}), we get
\begin{equation*} 
\widetilde{z}[k]= -\sum_{j = 1}^{L} h[j] \widetilde{z}[{k - j}] +\sum_{j = 1}^{L}
h[j]w[{k - j}] + z[k].
\end{equation*}}
Therefore,
\begin{equation}\label{eq:recur_n}
\sum_{j = 0}^{L} h[j] \widetilde{z}[{k - j}]  = \sum_{j = 1}^{L} h[j]w[{k -
j}] + z[k].
\end{equation}
\begin{figure*}
\begin{equation}
\label{eq:recur_n2}\sum_{s = 0}^{p_2}\sum_{r = 0}^{p_1}\sum_{j =
0}^{L}\beta_{s}^{(z)}\beta_{r}^{(w)} h[j] \widetilde{z}[{k - r - s - j}] 
= \sum_{s = 0}^{p_2}\sum_{r =
0}^{p_1}\sum_{j = 1}^{L}\beta_{s}^{(z)}\beta_{r}^{(w)} h[j] w[{k - r - s - j}]  + \sum_{s =
0}^{p_2}\sum_{r = 0}^{p_1}\beta_{s}^{(z)}\beta_{r}^{(w)}z[{k - r - s}].
\end{equation}
\begin{equation}\label{eq:recur_lemma}
\sum_{s = 0}^{p_2}\sum_{r = 0}^{p_1}\sum_{j =
0}^{L}\beta_{s}^{(z)}\beta_{r}^{(w)} h[j] \widetilde{z}[{k - r - s - j}]
= \sum_{j = 1}^{L}\sum_{s = 0}^{p_2}\sum_{r =
0}^{q_1}h[j]\beta_s^{(z)} \alpha_{r}^{(w)}\epsilon^{(w)}[{k - r - s -j}] +
\sum_{r = 0}^{p_1}\sum_{s =
0}^{q_2}\beta_r^{(w)}\alpha_{s}^{(z)}\epsilon^{(z)}[{k - r - s}].
\end{equation}
\end{figure*}
Using (\ref{eq:recur_n}) we have (\ref{eq:recur_n2}).


 With the definition of an ARMA$(p,q)$ noise process, we can represent the noise
 processes $\{w[k]\}_{k = 1}^\infty$ and $\{z[k]\}_{k = 1}^\infty$ as
\begin{subequations}\label{arma_wz}
\begin{align}
\label{arma_w} \sum_{i = 0}^{p_1}\beta_{i}^{(w)}w[k - i] & = \sum_{i =
0}^{q_1}\alpha_{i}^{(w)}\epsilon^{(w)}[k - i],\\
\label{arma_z} \sum_{i = 0}^{p_2}\beta_{i}^{(z)}z[k - i] & = \sum_{i =
0}^{q_2}\alpha_{i}^{(z)}\epsilon^{(z)}[k - i],
\end{align}
\end{subequations}
with $\beta_{0}^{(w)} = \beta_{0}^{(z)} = 1.$

However, using (\ref{arma_wz}) we know that,
\begin{align*}
\sum_{r = 0}^{p_1}\sum_{j = 1}^{L}\beta_{r}^{(w)}
h[j]w[{k - r - s - j}] & = \\
& \!\!\!\!\!\!\!\!\!\!\!\!\!\!\!\!\!\!\!\!\!\!\!\! \sum_{j = 1}^{L}\sum_{r =
0}^{q_1}h[j]\alpha_{r}^{(w)}\epsilon^{(w)}[{k - r - s -j}],
\end{align*}
\begin{equation*}
\!\!\!\!\!\!\!\sum_{s =
0}^{p_2}\sum_{r = 0}^{p_1}\beta_{s}^{(z)}\beta_{r}^{(w)}z[{k - r - s}]  =
\sum_{r = 0}^{p_1}\sum_{s =
0}^{q_2}\beta_r^{(w)}\alpha_{s}^{(z)}\epsilon^{(z)}[{k - r - s}].
\end{equation*}

Substituting these values into (\ref{eq:recur_n2}), we immediately get
(\ref{eq:recur_lemma}).

The inequality in the above proposition follows from the fact that the
autoregressive and moving-average part may have some common factors.
\end{proof}
\begin{remark}
{\color{black} Note that the process $\{\widetilde{z}[k]\}_{k = 1}^{\infty}$ is
not stationary because of the initial conditions that define the other random
processes $\{w[k]\}_{k = 1}^{\infty}$ and $\{z[k]\}_{k =
1}^{\infty}$. However these edge effects do not affect the asymptotic
distribution of the resulting noise process $\widetilde{z}[k]$.}
\end{remark}
\begin{remark}
{\color{black} While the process $\left\{\widetilde{z}[k]\right\}_{k =
1}^{\infty}$ is not stationary, we note that this non-stationarity has no effect
on the calculation of a lower bound on the
feedback capacity (an asymptotic measure) of the relay channel. Indeed, it is
possible to generate a stationary noise process $\left\{\widetilde{z}[k]\right\}_{k =
\max(p,q)}^{\infty}$ by correctly choosing the initial condition for
the ARMA process. This can be achieved by artificially injecting noise
and/or discarding the first $\max(p,q)$ symbols at the destination and
further imposing the condition that no signal is transmitted in the first
$\max(p,q)$ channel uses, i.e., $x[k] = 0, \forall k \leq \max(p,q)$.}
\end{remark}
\begin{corollary}
For the AWGN processes $\{w[k]\}_{k = 1}^{\infty}$ and $\{z[k]\}_{k =
1}^{\infty}$, the effective noise process $\{\widetilde{z}[k]\}_{k =
1}^{\infty}$ is an ARMA$(L, L - 1)$ Gaussian random process.
\end{corollary}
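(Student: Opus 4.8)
The corollary is the specialization of Proposition~\ref{prop1} to the case where both $\{w[k]\}$ and $\{z[k]\}$ are additive white Gaussian noise~(AWGN). The plan is to invoke the proposition directly: AWGN is precisely the degenerate ARMA process in which there is no memory, so I would model each of $\{w[k]\}$ and $\{z[k]\}$ as an ARMA$(0,0)$ process, i.e.\ $p_1 = q_1 = 0$ and $p_2 = q_2 = 0$ in the notation of Proposition~\ref{prop1}. This is the natural reading of white noise within the ARMA framework of Definition~1, since $\beta_0 = 1$ and the defining recursion~\eqref{def_arma} then reduces to $\widetilde{z}[k] = \alpha_0\,\epsilon[k]$, which is an i.i.d.\ Gaussian sequence.

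With these parameter values in hand, the proof is a one-line substitution into the order bounds established in Proposition~\ref{prop1}. First I would compute the autoregressive order:
\begin{equation*}
p \leq L + p_1 + p_2 = L + 0 + 0 = L.
\end{equation*}
Next I would compute the moving-average order:
\begin{equation*}
q \leq \max(L + p_2 + q_1 - 1,\; p_1 + q_2) = \max(L - 1,\; 0) = L - 1,
\end{equation*}
where the final equality uses $L \geq 1$, which holds because the relay implements an $L$-tap filter (so $L$ is at least one). This yields exactly an ARMA$(L, L-1)$ process, and Gaussianity is inherited because $\widetilde{\bz}$ is, by its definition~\eqref{def_n}, a fixed linear combination of the jointly Gaussian vectors $\bw$ and $\bz$.

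The one subtlety worth flagging is that Proposition~\ref{prop1} gives only upper bounds on $p$ and $q$, with the inequalities arising from possible pole-zero cancellations between the autoregressive and moving-average factors; so strictly speaking the corollary as stated asserts the \emph{generic} order rather than a guaranteed exact order. To make the claim tight I would return to the explicit recursion~\eqref{eq:recur_n}, which in the AWGN case (no coloring of $w$ or $z$) is already the governing ARMA relation: the left-hand side $\sum_{j=0}^{L} h[j]\,\widetilde{z}[k-j]$ exhibits the autoregressive polynomial $G(D) = \sum_{j=0}^{L} h[j] D^j$ of degree $L$ (since $h[0]=1$ and the leading tap $h[L]$ is nonzero for a genuine $L$-tap filter), while the right-hand side $\sum_{j=1}^{L} h[j]\,w[k-j] + z[k]$ is a moving average of the white inputs of degree $L-1$. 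Hence for generic filter taps no common factor appears and the order is exactly ARMA$(L, L-1)$, confirming the corollary. The main obstacle is therefore not computational at all but expository: one must decide whether to state the result as the generic order delivered by the clean substitution, or to supply the short additional argument from~\eqref{eq:recur_n} ruling out cancellation.
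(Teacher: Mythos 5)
Your proof is correct and takes essentially the same route the paper intends: the corollary is the direct specialization of Proposition~\ref{prop1} to $p_1 = q_1 = p_2 = q_2 = 0$, giving $p \leq L$ and $q \leq \max(L - 1, 0) = L - 1$. Your additional remark on exactness---ruling out pole--zero cancellation for generic taps via the recursion in (\ref{eq:recur_n})---addresses a point the paper glosses over (its proposition yields only upper bounds, while the corollary asserts the order outright), so it strengthens rather than departs from the paper's argument.
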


Having established that the effective noise process is an ARMA$(p,q)$ process
with state space representation as given in (\ref{ss_n}), we next present
a lower bound on the three-terminal relay with destination-source feedback. 

\begin{theorem}\label{arn:lemma2}
If the effective noise $\{\widetilde{z}[k]\}_{k = 1}^{\infty}$ defined in
(\ref{def_n}) is an ARMA$(p,q)$ process having a state space representation as
described in (\ref{ss_n}), a lower bound on the feedback capacity as defined
in (\ref{eq:CfbOpt}) of a three-terminal relay channel is given by
\begin{equation}\label{lb_arma}
R_{\rm LB} = \sup_{\bs, \{h[i]\}_{i = 1}^L}\frac{1}{2}\log\left(1 + (\bs +
\br)^T\bSigma(\bs + \br)\right)
\end{equation}
where $\bs \in \mathbb{R}^{d \times 1}$ such that $\bP - \bq(\bs + \br)^T$ has
no eigenvalue exactly on the unit circle and $\bs^T \bSigma \bs \leq
\rho/\alpha_0^2$ where $\bSigma$ is the maximal solution of the discrete Riccati Algebraic equation
\begin{equation}
\bSigma = \bP \bSigma \bP^T + \bq \bq^T - \frac{\left(\bP\bSigma(\bs + \br) +
\bq\right)\left(\bP\bSigma(\bs + \br) + \bq\right)^T}{1 + (\bs +
\br)^T\bSigma(\bs + \br)},
\end{equation}
the power constraint at the relay in (\ref{eq:lb_opt1}) is satisfied, and the
noise process $\{\widetilde{z}[k]\}_{k = 1}^{\infty}$ is stable~(see Definition
2).
\end{theorem}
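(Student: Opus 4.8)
The plan is to reduce the three-terminal relay channel with feedback to an equivalent point-to-point stationary Gaussian channel with noiseless feedback, and then invoke the feedback-capacity characterization of~\cite{Ki10}. The starting point is the processed model~(\ref{eq:sm1}), $\widetilde{\by} = \bx + \widetilde{\bz}$, obtained after the causal transformation $\widetilde{\by} = \bH^{-1}\by$. For any fixed choice of filter taps $\{h[i]\}_{i=1}^L$, Proposition~\ref{prop1} (and its corollary) guarantees that the effective noise $\{\widetilde{z}[k]\}_{k=1}^{\infty}$ is a stationary ARMA$(p,q)$ process admitting the state-space realization~(\ref{ss_n}) with matrices $\bP,\bq,\br$ and leading coefficient $\alpha_0$. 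Thus, once the filter is fixed, the relay problem \emph{is} a point-to-point stationary Gaussian channel with a power-constrained input and a unit-delay noiseless feedback link, exactly the setting treated in~\cite{Ki10}.

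First I would fix $\bH$ and apply the main result of~\cite{Ki10}. That result shows that the limiting feedback capacity of such a channel is achieved by an asymptotically stationary Schalkwijk--Kailath-type linear strategy parameterized by a vector $\bs \in \mathbb{R}^{d\times 1}$ that sets the feedback gain. Under this strategy the state covariance of the closed-loop recursion converges to the maximal solution $\bSigma$ of the discrete algebraic Riccati equation displayed in the statement, and the resulting per-symbol rate is exactly $\tfrac12\log\!\big(1+(\bs+\br)^T\bSigma(\bs+\br)\big)$. The hypothesis that $\bP-\bq(\bs+\br)^T$ has no eigenvalue on the unit circle is precisely the (anti)stabilizability condition ensuring that the maximal solution of the Riccati equation exists, that the closed loop is asymptotically stationary, and hence that the limit $N\to\infty$ of the $(N,L)$-block optimization of Lemma~\ref{NLOpt} is well defined; the requirement that $\{\widetilde{z}[k]\}$ be stable makes the underlying process stationary and ergodic so that~\cite{Ki10} applies.

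Next I would account for the two power constraints. The source constraint~(\ref{eq:sm2}), $\tfrac{1}{N+L}E[\bx^T\bx]\le\rho$, must be rewritten in the state-space parameters: in steady state the transmit power equals $\alpha_0^2\,\bs^T\bSigma\bs$ through the output relation $\widetilde{z}[k]=\alpha_0\br^T\bb[k]+\alpha_0\epsilon[k]$ in~(\ref{ss_n}), which yields exactly $\bs^T\bSigma\bs\le\rho/\alpha_0^2$. The relay power constraint~(\ref{eq:sm3}), already expressed as the second inequality of~(\ref{eq:lb_opt1}), is carried along as the side condition that the relay power constraint be satisfied. Finally, since we are free to design the filter taps, I would take a supremum of the fixed-$\bH$ rate over both $\bs$ and $\{h[i]\}_{i=1}^L$; because every admissible pair $(\bs,\{h[i]\})$ gives a concrete stationary coding scheme whose rate is achievable, this supremum is a genuine lower bound on $C_{\mathrm{FB},L}$ in~(\ref{eq:CfbOpt}), and restricting attention to stationary ergodic Gaussian noise is exactly why we obtain a bound rather than the full capacity.

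The step I expect to be the main obstacle is the rigorous interchange of limit and optimization, namely showing that the limiting rate of the asymptotically stationary strategy from~\cite{Ki10} indeed lower-bounds $\lim_{N\to\infty}C_{\mathrm{FB},N,L}$, together with the careful bookkeeping that translates the transmit-power budget~(\ref{eq:sm2}) into $\bs^T\bSigma\bs\le\rho/\alpha_0^2$ via the realization~(\ref{ss_n}). Once the effective channel is recognized as a stationary Gaussian channel with feedback and the state-space data $(\bP,\bq,\br,\alpha_0)$ are identified, the remainder is a direct transcription of the Riccati-based capacity formula of~\cite{Ki10}; the genuinely new content is the reduction itself and the handling of the extra relay power constraint.
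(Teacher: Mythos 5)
Your proposal is correct and follows essentially the same route as the paper: fix the filter taps, view the processed system $\widetilde{\by} = \bx + \widetilde{\bz}$ as a virtual point-to-point stationary Gaussian channel with noiseless feedback (using Proposition~\ref{prop1} for the ARMA structure), transcribe the Riccati-based feedback-capacity formula of~\cite{Ki10} together with the power translation $\bs^T \bSigma \bs \leq \rho/\alpha_0^2$ and the unit-circle eigenvalue condition, and then take the supremum over $\bs$ and $\{h[i]\}_{i=1}^L$ subject to the relay power constraint. The paper's own proof is a terser version of exactly this reduction-plus-transcription argument, likewise deferring the rate calculation and stationarity details to~\cite{Ki10}.
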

\begin{proof}
As shown in the subsection on problem reformulation, we can write the
effective system as
\begin{equation}
\widetilde{\by}  = \bx + \widetilde{\bz},
\end{equation}
such that
\begin{subequations}
\begin{align}
\label{lem1:eq2} \quad E[\bx^T\bx] \leq (N + L)\rho,\\ 
\label{lem1:eq3} \quad \frac{1}{N + L}\textrm{tr}\left(\bH E\left[(\bx +
\bw)(\bx + \bw)^T\right]\bH^T\right) \leq \gamma \rho.
\end{align}
\end{subequations}

 With the fixed set of filter taps, the three-terminal relay problem can be
 viewed as a virtual point-to-point link with the power constraints  given by
 (\ref{lem1:eq2}) and (\ref{lem1:eq3}). We begin by solving for the
 optimal achievable rate for a given set of $\{h[\ell]\}_{\ell = 1}^{L}$.
 
 {\color{black}This is the key result of the work in \cite{Ki10}, and here we
 outline the broad techniques followed in there. The first step in the proof
 begins with characterizing the feedback capacity in its variational form. In
 particular, Theorem 3.2 in \cite{Ki10} states that the feedback capacity for
 a point-to-point link is given by
 \begin{equation*}
 C_{\mathrm{FB}} = \sup_{S_V, B}
 \int_{-\pi}^{\pi}\frac{1}{2}\log\frac{S_{V}(e^{j\omega}) + \left|1 +
 B(e^{j\omega})\right|^2S_{\widetilde{Z}}(e^{j\omega})}{S_{\widetilde{Z}}(e^{j\omega})}\frac{d\omega}{2\pi},
 \end{equation*}
 where $S_{\widetilde{Z}} (e^{j\omega})$ is the power spectral density of the
 noise process $\{\widetilde{z}[k]\}_{k = 1}^{\infty}$ and the maximum is
 performed over all the non-negative power spectral densities
 $S_{V}(e^{j\omega})$ and all strictly causal filters $B(e^{j\omega})$ that
 satisfy the power constraint at the source given by
 \begin{equation*}
 \int_{-\pi}^{\pi}\left(S_{V}(e^{j\omega}) +
 \left|B(e^{j\omega})\right|^2S_{\widetilde{Z}}
 (e^{j\omega})\right)\frac{d\omega}{2\pi} \leq \rho.
 \end{equation*}
 Following the above variational characterization of the feedback capacity, the
 next step involves coming up with the optimal structure for the pair
 $(S_V^{\rm{opt}}, B^{\rm{opt}})$. It is then shown in Theorem 4.6 in
 \cite{Ki10} that without any loss of optimality, one can assume that 
 $S_V^{\rm{opt}} = 0$. This effectively reduces the above expression as
 \begin{equation}\label{eq:var1}
 C_{\mathrm{FB}} = \sup_{B}
 \int_{-\pi}^{\pi}\frac{1}{2}\log\left|1 +
 B(e^{j\omega})\right|^2\frac{d\omega}{2\pi},
 \end{equation}
 such that
 \begin{equation}
 \int_{-\pi}^{\pi}
 \left|B(e^{j\omega})\right|^2S_{\widetilde{Z}}
 (e^{j\omega})\frac{d\omega}{2\pi} \leq \rho.
 \end{equation}
 
As a result, all the effort is now spent on coming up with the optimal causal
feedback filter $B^{\rm{opt}}(e^{j\omega})$. For an ARMA$(p,q)$ noise process
that we consider having the state space representation in (\ref{ss_n}),
the power spectral density is given by $S_{\widetilde{z}}(e^{j\omega}) = \left|
H_{\widetilde{z}}(e^{j\omega})\right|^2,$
where 
\begin{equation*}
H_{\widetilde{z}}(e^{j\omega}) = \alpha_0e^{j\omega}\br^T(\bI -
e^{j\omega}\bP)^{-1}\bq + \alpha_0.
\end{equation*}
The next step in the process is to identify the structure of the optimal coding
scheme at the source. It is shown in \cite{Ki10} that the optimal coding
strategy has to be of the form
 \begin{equation*}
 x[k] = \bs^T\left(\bb[k] - E\left[\bb[k]\big{|}\{\widetilde{y}[i]\}_{i = 1}^{k
 - 1}\right]\right)
 \end{equation*}
 for some $\bs$ such that $\bP - \bq(\bs + \br)^T$ has
no eigenvalue exactly on the unit circle. Intuitively, the above coding
structure ensures that in every new transmission, only new information is
being transmitted which is orthogonal to all the transmissions already done.
Once the optimal structure of the coding scheme has been determined, the calculation
of the optimal rate follows after direct substitution of the values as outlined
in~\cite{Ki10}.
We then maximize this achievable rate over the set of all filter taps while making sure that the constraint in (\ref{lem1:eq3}) is satisfied to obtain the above result. Note that in the above analysis we have assumed that the value of $L$ does not scale
with the change in $N$, i.e., the number of filter taps remain the same even
when the number of transmissions used for the message $\btheta$ increases.} 
\end{proof}

\section{Specialized Results}
In this section, we examine some of the special cases of the generalized relay
network model considered in initial formulation of the problem.

\subsection{Amplify-and-Forward Relay Node}\label{subsec:afs}
In this case the relay network consists of one relay node in total with a
single filter tap $h[1]$. Furthermore, assume that the noise processes
$\{w[k]\}_{k = 1}^\infty$ and $\{z[k]\}_{k = 1}^\infty$ are MA(1) random
processes given by
\begin{subequations}\label{ma_1}
\begin{align}
w[k] &  = \alpha_0^{(w)}\epsilon^{(w)}[k] + \alpha_1^{(w)}\epsilon^{(w)}[k
-1],\\ 
z[k] & = \alpha_0^{(z)}\epsilon^{(z)}[k] + \alpha_1^{(z)}\epsilon^{(z)}[k-1].
\end{align}
\end{subequations}
With the above setting, we have a lower bound on the feedback capacity as given
below.
\begin{lemma}\label{lemma_MA}
A lower bound on the feedback capacity of a three-terminal relay channel with
one filter tap~($C_{\rm{FB}, 1}$ in (\ref{eq:CfbOpt})) with source-to-relay and
source-to-destination noise evolving as MA(1) noise process~(see (\ref{ma_1})) is given by
\begin{equation}
R_{\rm LB} = \sup_{h[1]}\left(-\log \xi_0\right),
\end{equation}
where $\xi_0$ is the unique positive root of the quartic polynomial
\begin{equation}
\frac{\rho}{\alpha_0^2}\xi^2 = \frac{(1 - \xi^2)(1 + \psi
\alpha_1/\alpha_0\xi)^2}{(1 + \psi h[1]\xi)^2},
\end{equation}
with $\psi = sgn(h[1] - \alpha_1/\alpha_0)$ and $h^2[1] \leq \min(\gamma
\frac{P}{P + \sigma_w^2},1)$.
\end{lemma}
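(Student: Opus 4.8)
The plan is to obtain this lemma as the single-tap, first-order specialization of Theorem~\ref{arn:lemma2}: (i) identify the effective noise $\widetilde{\bz}$ as a first-order ARMA process, (ii) write its scalar state-space model from Definition~3, (iii) solve the resulting scalar Riccati equation in closed form and reduce the rate optimization to the stated quartic, and (iv) translate the power and stability constraints into the feasible set for $h[1]$. I would first invoke Proposition~\ref{prop1} with $L=1$ and $p_1=p_2=0$, $q_1=q_2=1$, which forces $\widetilde{z}[k]$ to be ARMA$(1,1)$. Concretely, for a single tap the recursion (\ref{eq:recur_n}) reads $(1+h[1]D)\widetilde{z}[k]=h[1]\,w[k-1]+z[k]$; since $w$ and $z$ are independent MA(1) processes, the driving term on the right has spectral density $h^2[1]S_w(\omega)+S_z(\omega)$, a nonnegative trigonometric polynomial of degree one. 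Its minimum-phase spectral factorization collapses this into a single white innovation driving an MA(1) term with combined coefficients $\alpha_0,\alpha_1$ (these are exactly the $\alpha_0,\alpha_1$ appearing in the lemma, and this factorization is what will later fix the sign convention $\psi$). Hence $\widetilde{\bz}$ is ARMA$(1,1)$ with AR coefficient $\beta_1=h[1]$ and MA coefficients $\alpha_0,\alpha_1$.

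With $d=\max(p,q)=1$, Definition~3 then gives the scalar state-space data $\bP=-h[1]$, $\bq=1$, and $\br=\alpha_1/\alpha_0-h[1]$. Substituting these scalars into Theorem~\ref{arn:lemma2} and writing $t:=s+r$, the algebraic Riccati equation degenerates to a scalar quadratic in $\bSigma$, which I would solve for its maximal root; the achievable rate is then $\frac{1}{2}\log(1+t^2\bSigma)$, maximized over $s$ subject to the source-power constraint $s^2\bSigma\le\rho/\alpha_0^2$. The decisive move is the change of variable $\xi:=(1+t^2\bSigma)^{-1/2}\in(0,1)$, which makes the rate equal to $-\log\xi$. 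The optimum in $s$ (equivalently in $t$) is attained with the power constraint active, and eliminating $\bSigma$ and $t$ between the Riccati relation and the active constraint collapses everything to the displayed quartic in $\xi$, whose unique root in $(0,1)$ is $\xi_0$. In this reduction the factor $(1-\xi^2)$ is the generic first-order contribution of the Riccati solution, while $(1+\psi h[1]\xi)^2$ and $(1+\psi\,\alpha_1/\alpha_0\,\xi)^2$ are the AR and MA polynomials evaluated at the optimal real closed-loop pole $D=\psi\xi$; the common sign $\psi=\mathrm{sgn}(h[1]-\alpha_1/\alpha_0)=-\mathrm{sgn}(r)$ is precisely the pole-placement/minimum-phase choice that maximizes the rate.

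Finally I would pin down the admissible range of the remaining free variable $h[1]$. Stability of the effective ARMA$(1,1)$ noise (Definition~2) requires the zero of $1+h[1]D$ to lie strictly outside the unit circle, i.e.\ $h^2[1]<1$. The relay power constraint in (\ref{eq:lb_opt1}) reads, for one tap, $h^2[1]\,E[u^2]\le\gamma P$ with $\bu=\bx+\bw$ and $E[u^2]=P+\sigma_w^2$, giving $h^2[1]\le\gamma P/(P+\sigma_w^2)$; intersecting the two bounds yields $h^2[1]\le\min\!\big(\gamma P/(P+\sigma_w^2),\,1\big)$, and taking the supremum of $-\log\xi_0$ over this interval completes the argument. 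I expect the principal obstacle to be step~(iii): carrying out the spectral factorization that defines $\alpha_0,\alpha_1$, solving the scalar Riccati for its maximal root, and then algebraically reducing the coupled Riccati-and-active-constraint system to the single quartic while correctly tracking the sign $\psi$ and verifying that $\xi_0$ is the unique root in $(0,1)$.
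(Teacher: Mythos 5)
Your proposal is correct, and its first half coincides with the paper's own proof: both identify the effective noise as the ARMA(1,1) process $\widetilde{z}[k]+h[1]\widetilde{z}[k-1]=\alpha_0\epsilon[k]+\alpha_1\epsilon[k-1]$, with $(\alpha_0,\alpha_1)$ fixed by matching the variance and lag-one covariance of $h[1]w[k-1]+z[k]$ (your spectral-factorization phrasing is the same computation as the paper's two moment equations), and both obtain the feasible set $h^2[1]\le\min\left(\gamma\rho/(\rho+\sigma_w^2),1\right)$ from the relay power bound with $E[u^2[k]]=\rho+\sigma_w^2$ together with stability of the AR part. The divergence is in the last step: the paper, for fixed $h[1]$, simply cites the closed-form feedback capacity of a point-to-point ARMA(1,1) channel from \cite{Ki10}, which is exactly the stated quartic, whereas you re-derive that closed form by specializing Theorem~\ref{arn:lemma2} to scalars. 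The reduction you flag as the main obstacle does go through: with $\bP=-h[1]$, $\bq=1$, $\br=\alpha_1/\alpha_0-h[1]$, and $t=s+r$, the scalar Riccati equation factors as $\bSigma\left(t^2\bSigma-\left((h[1]+t)^2-1\right)\right)=0$, whose maximal solution gives $1+t^2\bSigma=(h[1]+t)^2$, so the rate is $\log|h[1]+t|$; writing $h[1]+t=\pm 1/\xi$, noting that the source power constraint must be active at the optimum, and substituting $s=t-r$ yields $\frac{\rho}{\alpha_0^2}\xi^2=\frac{(1-\xi^2)\left(1\mp(\alpha_1/\alpha_0)\xi\right)^2}{\left(1\mp h[1]\xi\right)^2}$, with the rate-maximizing sign choice reproducing $\psi=\mathrm{sgn}(h[1]-\alpha_1/\alpha_0)$. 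So your route trades the paper's one-line citation for explicit algebra: it buys self-containedness (the quartic is derived from the paper's own Theorem~\ref{arn:lemma2} rather than imported) and makes transparent where each factor of the quartic comes from, while the genuine remaining work is exactly what you identified---pinning down $\psi$ and proving uniqueness of the root in $(0,1)$---facts the paper inherits for free from \cite{Ki10}.
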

\begin{proof}
For MA(1) noise processes in (\ref{ma_1}) and only one filter tap at the
relay, the effective noise process $\{\widetilde{z}[k]\}_{k = 1}^{\infty}$ can
be described by an ARMA(1,1) noise process
\begin{equation}
\label{eq:arma1_eff}
\widetilde{z}[k] + h[1] \widetilde{z}[k - 1] = \alpha_0 \epsilon[k] + \alpha_1 \epsilon[{k - 1}],
\end{equation}
where
\begin{align*}
\alpha_0^2 + \alpha_1^2 & = 1 + h^2[1]\sigma_w^2,\\
\alpha_0\alpha_1 & = \alpha_0^{(z)}\alpha_1^{(z)} +
h^2[1]\alpha_0^{(w)}\alpha_1^{(w)}.
\end{align*}

Furthermore, the power constraint at the relay in (\ref{eq:lb_opt1}) can be
upper bounded by
\[h^2[1] \leq \gamma \frac{\rho}{\rho + \sigma_w^2}.\]
Note that we also require that $h^2[1] < 1$ to ensure that the noise process
$\{\widetilde{z}[k]\}_{k = 1}^{\infty}$ is stable. Therefore, we have the
overall constraint on the filter tap as $h^2[1] \leq \min(\gamma\frac{\rho}{\rho
+ \sigma_w^2},1)$. 

{\color{black}The spectral density of the overall noise process in
(\ref{eq:arma1_eff}) is given by
\begin{equation}
S_{\widetilde{z}}(e^{j\omega}) = \left|\frac{\alpha_0 + \alpha_1e^{j\omega}}{1 +
h[1]e^{j\omega}}\right|^2.
\end{equation}
As outlined before in the proof of Theorem \ref{arn:lemma2}, the main goal is to
come up with a strictly causal optimal feedback filter
$B(e^{j\omega})$ that achieves the capacity. It has been shown in \cite{Ki10}
that an optimal feedback filter is of the form
\begin{equation}
B(e^{j\omega}) = \frac{1 + h[1]e^{j\omega}}{\alpha_0 + \alpha_1
e^{j\omega}}\cdot \frac{\eta e^{j\omega}}{1 - \psi \xi e^{j\omega}},
\end{equation}
where $\xi \in (0,1)$ is chosen so that
\begin{equation}
\eta = \frac{\xi^2 - 1}{\psi \xi}\cdot\frac{\alpha_0 + \alpha_1\psi \xi}{1 +
h[1]\psi \xi} = -\rho \psi \xi\frac{1 + h[1] \psi \xi}{\alpha_0 + \alpha_1
\psi \xi}.
\end{equation}
Plugging the value of this optimal filter into the variational form of capacity
as outlined in (\ref{eq:var1}) immediately gives us the above result.
}

\end{proof}

It is worth pointing out that the rate $R_{\rm LB}$ in the above lemma is
achievable by a variation of the celebrated SK scheme~\cite{ScKa66, Sc66} as
outlined in \cite{Ki10}. Note that the above lemma contains as a special case
the lower bound in \cite{BrWi09} for the relay channel with white noise (i.e.,
$\alpha_1^{(w)} = \alpha_1^{(z)} = 0$ in (\ref{ma_1})) which was shown to
outperform the more sophisticated block-Markov strategies~\cite{CoGa79} for a
wide selection of available power~($\gamma$) at the relay. For AWGN, the
effective noise process in our formulation reduces to an AR(1) process for
which the scheme proposed in \cite{Bu69} achieves the lower bound.

\begin{corollary}
(Theorem 5 in \cite{BrWi09}): A lower bound on the three-terminal relay
channel with AWGN processes is given by $R_{\rm LB} = \sup_{h[1]}(-\log \xi_0)$,
where $\xi_0$ is the unique positive root of the quartic polynomial
\begin{equation*}
\frac{P}{1 + \sigma_w^2 h^2[1]}\xi^2 = \frac{(1 - \xi^2)}{(1 + |h[1]|\xi)^2},
\end{equation*}
with $h^2[1] \leq \min(\gamma
\frac{P}{P + \sigma_w^2},1).$
\end{corollary}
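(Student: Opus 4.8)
The plan is to obtain this corollary as a direct specialization of Lemma~\ref{lemma_MA}, which already treats the general MA(1) case. The AWGN regime is exactly the restriction $\alpha_1^{(w)} = \alpha_1^{(z)} = 0$ in (\ref{ma_1}), so that $w[k] = \alpha_0^{(w)}\epsilon^{(w)}[k]$ and $z[k] = \alpha_0^{(z)}\epsilon^{(z)}[k]$ are white with $(\alpha_0^{(z)})^2 = 1$ and $(\alpha_0^{(w)})^2 = \sigma_w^2$. First I would compute the effective-noise coefficients under this restriction using the two relations established inside the proof of Lemma~\ref{lemma_MA}, namely $\alpha_0^2 + \alpha_1^2 = 1 + h^2[1]\sigma_w^2$ and $\alpha_0\alpha_1 = \alpha_0^{(z)}\alpha_1^{(z)} + h^2[1]\alpha_0^{(w)}\alpha_1^{(w)}$. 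Setting the MA coefficients of $w$ and $z$ to zero kills the right-hand side of the second relation, so $\alpha_0\alpha_1 = 0$; since the innovations representation (\ref{ss_n}) requires a nonzero instantaneous gain $\alpha_0$, we conclude $\alpha_1 = 0$ and hence $\alpha_0^2 = 1 + \sigma_w^2 h^2[1]$. This reconfirms that the effective noise collapses to the AR(1) process asserted immediately before this corollary.

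Next I would substitute $\alpha_1 = 0$ into the quartic characterization of $\xi_0$ in Lemma~\ref{lemma_MA}. The numerator factor $(1 + \psi\,\alpha_1/\alpha_0\,\xi)^2$ reduces to $1$, while $\psi = \mathrm{sgn}(h[1] - \alpha_1/\alpha_0) = \mathrm{sgn}(h[1])$, so that $\psi h[1] = |h[1]|$. Combined with $\alpha_0^2 = 1 + \sigma_w^2 h^2[1]$ (and writing the source power as $P = \rho$), the defining equation of Lemma~\ref{lemma_MA} collapses to
\begin{equation*}
\frac{P}{1 + \sigma_w^2 h^2[1]}\,\xi^2 = \frac{1 - \xi^2}{(1 + |h[1]|\xi)^2},
\end{equation*}
which is precisely the stated quartic. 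The relay power and stability constraint $h^2[1] \leq \min(\gamma\frac{P}{P + \sigma_w^2},1)$ was derived in Lemma~\ref{lemma_MA} from (\ref{eq:lb_opt1}) and from requiring the effective process to be stable, independently of the MA coefficients, so it carries over verbatim. Maximizing $-\log\xi_0$ over $h[1]$ then yields the claimed $R_{\rm LB}$.

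The computation is entirely routine and I do not anticipate a genuine obstacle beyond bookkeeping. The single point that warrants a moment of care is the evaluation of $\psi$ as $\alpha_1 \to 0$: one must verify that the sign is taken of $h[1] - \alpha_1/\alpha_0$ and that the limit gives $\psi h[1] = |h[1]|$ rather than $h[1]$, which it does. As an independent cross-check I would note that the AR(1) structure can be seen directly from (\ref{eq:recur_n}): for white $w$ and $z$ the driving term $h[1]w[k-1] + z[k]$ is a sum of independent white processes at distinct lags and is therefore itself white with variance $1 + \sigma_w^2 h^2[1]$, so no appeal to the general MA(1) coefficient relations is strictly necessary. Thus the corollary follows immediately from Lemma~\ref{lemma_MA}.
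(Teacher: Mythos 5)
Your proof is correct and takes essentially the same route the paper intends: the corollary is presented immediately after Lemma~\ref{lemma_MA} precisely as its white-noise specialization $\alpha_1^{(w)} = \alpha_1^{(z)} = 0$, under which $\alpha_1 = 0$, $\alpha_0^2 = 1 + \sigma_w^2 h^2[1]$, $\psi h[1] = |h[1]|$, and the quartic of the lemma collapses to the stated one with the relay/stability constraint carrying over unchanged. Your direct AR(1) cross-check via (\ref{eq:recur_n}) also matches the paper's own remark that for AWGN the effective noise reduces to an AR(1) process.
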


\begin{figure}
\centering
\includegraphics[scale = 0.45]{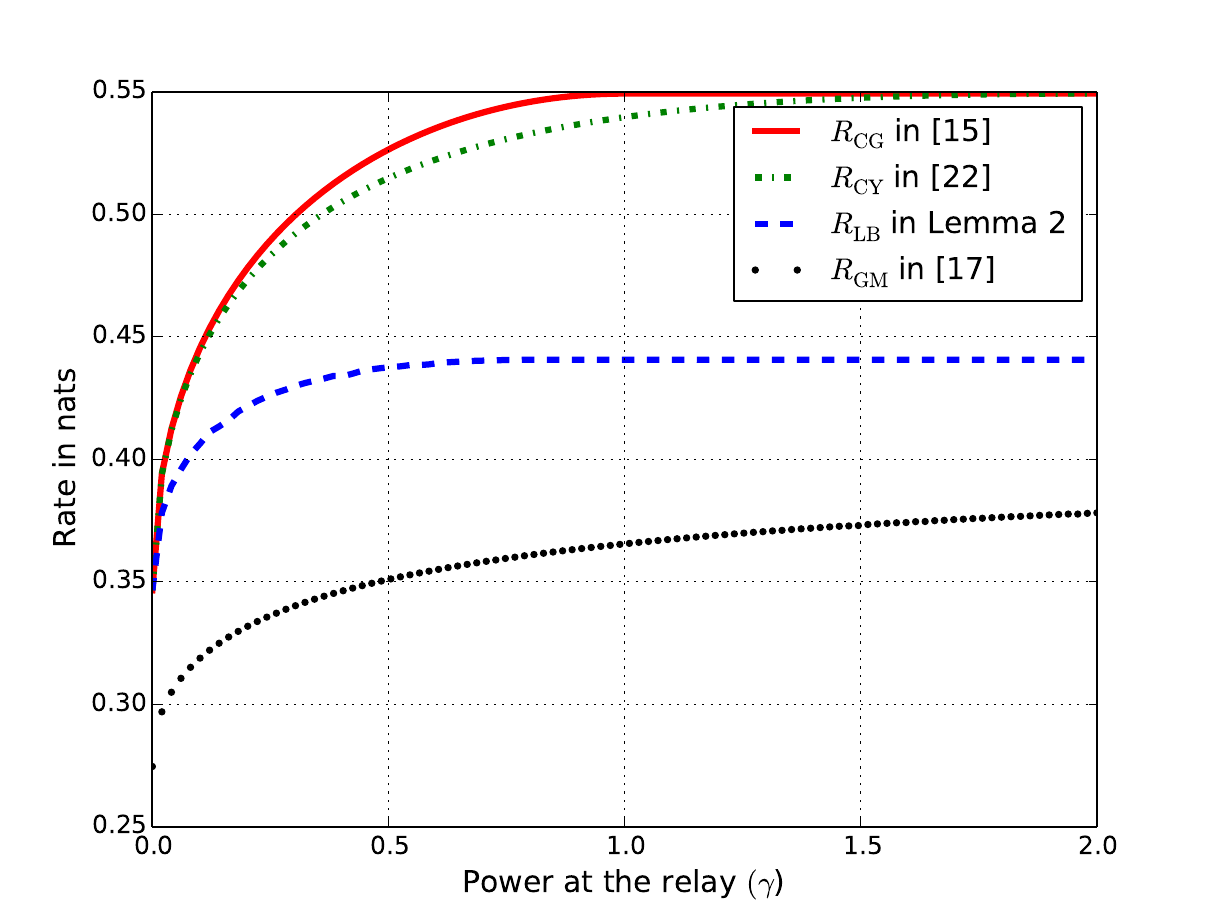}
\caption{Comparison of achievable rates for various
schemes under AWGN with power available at the relay. The parameters
used for simulation were $\rho = 1$ and $\sigma_w^2 = 1$.}
\label{sim:cr}
\end{figure}
 
\begin{figure}
\centering
\includegraphics[scale = 0.45]{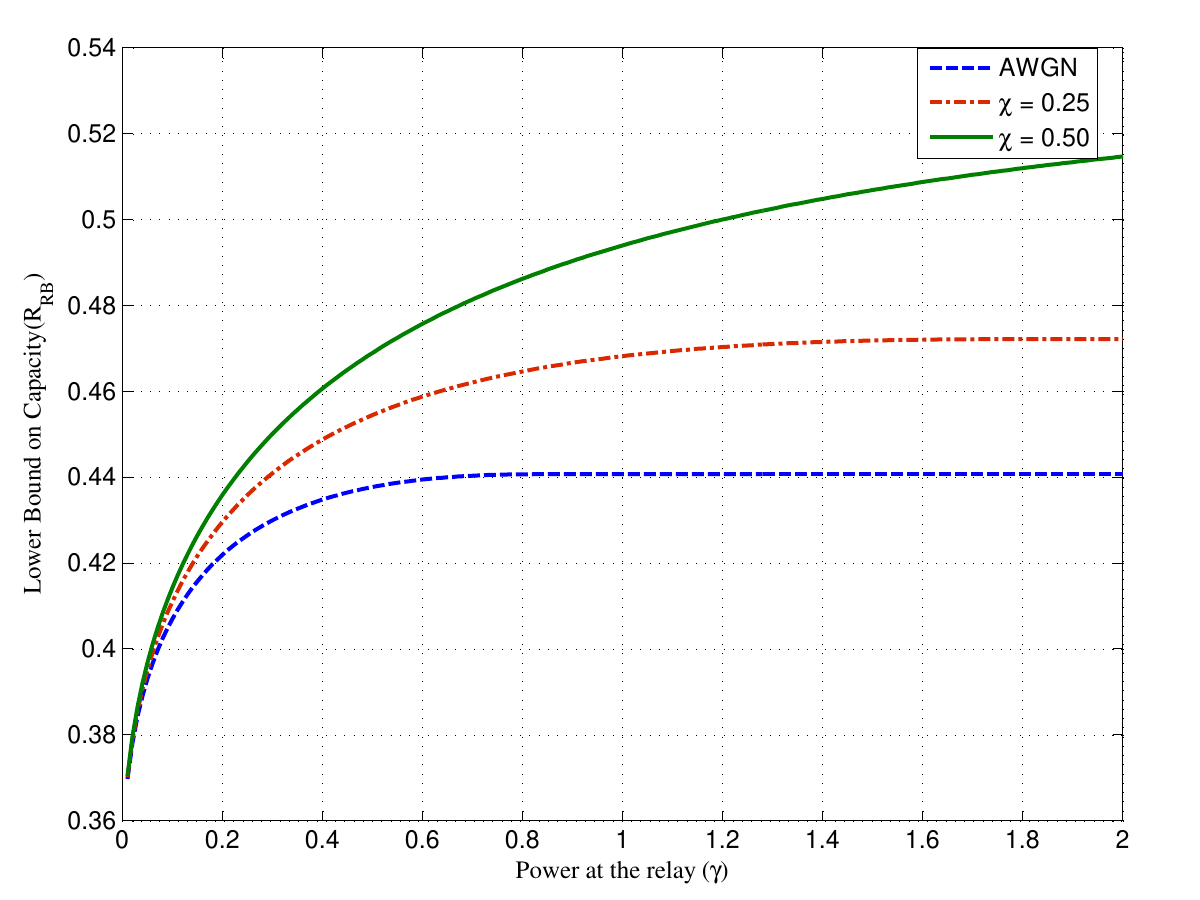}
\caption{Plot of variation of lower bound on capacity~($R_{\rm LB}$) with power
available at the relay. The parameters used for simulation were $\rho = 1$ and
$\sigma_w^2 = 1$. The blue~(dashed), red~(dashed-dot) and green~(solid)
curves correspond to $\chi$ of 0, 0.25, and 0.50 in (\ref{sim1}), respectively.}
\label{sim:ma}
\end{figure} 
{\color{black} For illustration, we compare the lower bound in Lemma
\ref{lemma_MA} ($R_\mathrm{LB}$) with various bounds and schemes available in
the literature. In Figure \ref{sim:cr}, we compare the rates for AWGN relay
channel~(both $\{w[k]\}_{k = 1}^\infty$ and $\{z[k]\}_{k = 1}^\infty$ are
assumed to be white) with $\rho = 1$ and $\sigma_w^2 = 1$ for varying power at
the relay node~($\gamma$).
$R_\mathrm{CG}$ denotes the max-flow min-cut upper bound derived in \cite{CoGa79}, $R_\mathrm{CY}$ the
capacity of a three terminal amplify-and-forward relay channel with 
source information available with {zero-delay~(non-causally)} at the
relay node in \cite{ChYo12}, while $R_\mathrm{GM}$ plots the achievable rate for a linear scheme at the relay node with blocklength of
two as proposed in \cite{GaMo06}. It can be seen from Figure \ref{sim:cr} that
our proposed scheme can offer significant improvements over the open loop linear
coding scheme due to the availability of feedback link from the destination to
the source. For $\gamma = 0.5$, we obtain about $25.7\%$ improvement in the
achievable rate over the scheme in \cite{GaMo06}. However, Figure \ref{sim:cr}
also shows that the achievable rate with non-causal
information available at the relay is sufficiently higher than our proposed
lower bound with feedback.}

We next present numerical results to demonstrate the rate improvements achieved
due to the noise {with memory} in the source-to-relay link. We
assume that the source-to-relay noise process $\{w[k]\}_{k = 1}^\infty$ is an MA(1) noise process as
 \begin{equation}\label{sim1}
w[k] = \sigma_w\chi \epsilon^{(w)}[k] + \sigma_w \sqrt{1 -
\chi^2}\epsilon^{(w)}[k - 1],
\end{equation}
with $0 \leq \chi \leq 1$. {The source-to-destination link
$\{z[k]\}_{k = 1}^\infty$ is assumed to be AWGN with $\cN(0,1)$.}

Figure \ref{sim:ma} plots the variation of the lower bound on the three-terminal
relay capacity as a function of the relay transmit power~($\gamma$). A higher
value of $\gamma$ implies more power available at the relay node. A
value of $\gamma = 1$ means that both the source and the relay have the
same amount of available average power; whereas $\gamma = 0$ completely shuts-off the relay with no
power available. The simulations were performed with $P = 1$, $\sigma_w^2 =
1$, and the value of $\gamma$ in the interval $[0,2]$.
The AWGN curve corresponds to the case when $\chi = 0$ for source-to-relay
noise~\cite{BrWi09}. As pointed out before, $\gamma = 0$ implies the absence of
a relay altogether; this explains the convergence of all the three curves to
the same lower bound~(in fact capacity) as $\gamma \rightarrow 0$. It is seen
that the availability of more power at the relay with AWGN noise between the
source-to-relay pair can lead to an improvement of $19\%$ over the
point-to-point link with the given simulation parameters.

As the {memory} factor $\chi$ is increased in the source-to-relay
pair, we see even further improvement in the achievable rate. For $\chi =
0.25$, the rate can increase by $28\%$ to a maximum rate of 0.472
nats/channel use. The achievable rate shows a $43\%$ improvement
over the point-to-point {AWGN} link capacity for $\chi = 0.50$. This
suggests that significant gains in the rates can be achieved when noise
{ with memory} is present at the source-to-relay pair and additional
power~($\gamma > 0$) is available at the relay node.

\subsection{AWGN Relay with Two Taps}
Now we look at the three-terminal relay node with two filter taps $\{h[1],
h[2]\}$.  The noise vectors $\bw$ and $\bz$ are assumed to be white, i.e.,
$E[\bz \bz^T] = \bI$ and $E[\bw \bw^T] = \sigma_w^2 \bI$. To satisfy the
stability constraints on the filter taps, we require that the polynomial
$\phi(b) = b^2 + h_1b + h_2$, has both the roots inside the unit circle. This
corresponds to the following conditions on choosing $h[1]$ and $h[2]$:
\begin{equation}\label{eq:2tap1}
1 - |h[1]| + h[2]  > 0, \quad |h[2]| < 1.
\end{equation}
Furthermore, the power constraint at the relay puts an additional upper bound on
the taps 
\be \label{eq:2tap2}
h^2[1] + h^2[2] \leq \frac{\gamma \rho}{\sigma_w^2}.
\ee
{Note that the equality in the above equation will hold only in
the unlikely scenario of the source not transmitting any message at all. As a
result, (\ref{eq:2tap2}) only serves the purpose of limiting the
range of values that the taps $h[1]$ and $h[2]$ can have without making
any relaxation to the original problem}.
As noted before, numerical optimization of the lower bound on the feedback relay capacity in Lemma \ref{arn:lemma2} is not obvious for the case where more than one filter tap is available at the relay. Therefore, we try to
compute the gains of two taps using the $(N,L)~$block feedback capacity as
described in Proposition \ref{lemma_convex}.

We begin by arbitrarily setting the coefficients of the filter taps $\{h[1],
h[2]\}$ to satisfy the stability and power constraints at the relay given by
(\ref{eq:2tap1}) and (\ref{eq:2tap2}). Now having chosen the coefficients, the
optimization in (\ref{eq:lb_opt1}) reduces to a convex optimization problem as
outlined by Proposition \ref{lemma_convex} which can be solved numerically. For
each value of $\gamma$, we optimize by arbitrarily generating 1000 candidate
filter taps and then taking the maximum achievable rate over all the generated
filter possibilities. The result of such an analysis is shown in Table \ref{table1}.

Table \ref{table1} shows the lower bound on the $(N, L)-$block feedback capacity
for $N = 20, \rho = 1,$ and $\sigma_w^2 = 0.1$. In the absence of a relay, the
maximum achievable rate is simply the capacity of the point-to-point
{AWGN} link.
With the help of just one-filter tap, an improvement of up to $65.6\%$ is achievable. This
happens when the power factor at the relay is given by $\gamma = 1.1$.

The availability of additional power at the relay~($\gamma > 1.1)$, however,
does not lead to any further improvement in the achievable rate with a simple
amplify-and-forward scheme. The presence of two taps allows us to exploit
additional power available at the relay for rate improvement. In Table
\ref{table1}, we see that even with an increase in $\gamma$ from $1.1$ to $1.3$
leads to an improvement in the lower rate by about $5\%$. Furthermore with more
power available at the relay node, the lower bound on the three-terminal relay
shows an improvement of almost 100\% over no power at the relay. This example
suggests that an increase in the number of taps at the relay can lead to better
achievable rates if substantial power is available at the relay.
\begin{table} 
\caption{Lower Bound on feedback capacity with two filter taps for $N = 20, \rho
= 1,$ and $\sigma_w^2 = 0.1$}
\begin{center}
\begin{tabular}{|c|c|c|c|c|}
\hline
\hline
$\gamma$ & $h_1$ & $h_2$ & $R_{\rm{LB,N}}$ & \% change\\
\hline
$0$ & $0.00$ & $0.00$ & $0.346$ & --\\
$1.1$ & 1.00 & 0.00 & 0.573 & $65.6\%$ \\
1.3 & 1.04 & 0.14 & 0.589 & $70.2\%$\\
1.8 & -1.21 & 0.26 & 0.610 & $76.3\%$\\
2.5 & 1.36 & 0.39 & 0.633 & $82.9\%$\\
5.0 & -1.67 & 0.88 & 0.667 & $92.8\%$\\
\hline
\hline
\end{tabular}
\end{center}
\label{table1}
\end{table} 

\begin{figure}
\centering
\includegraphics[scale = 0.25]{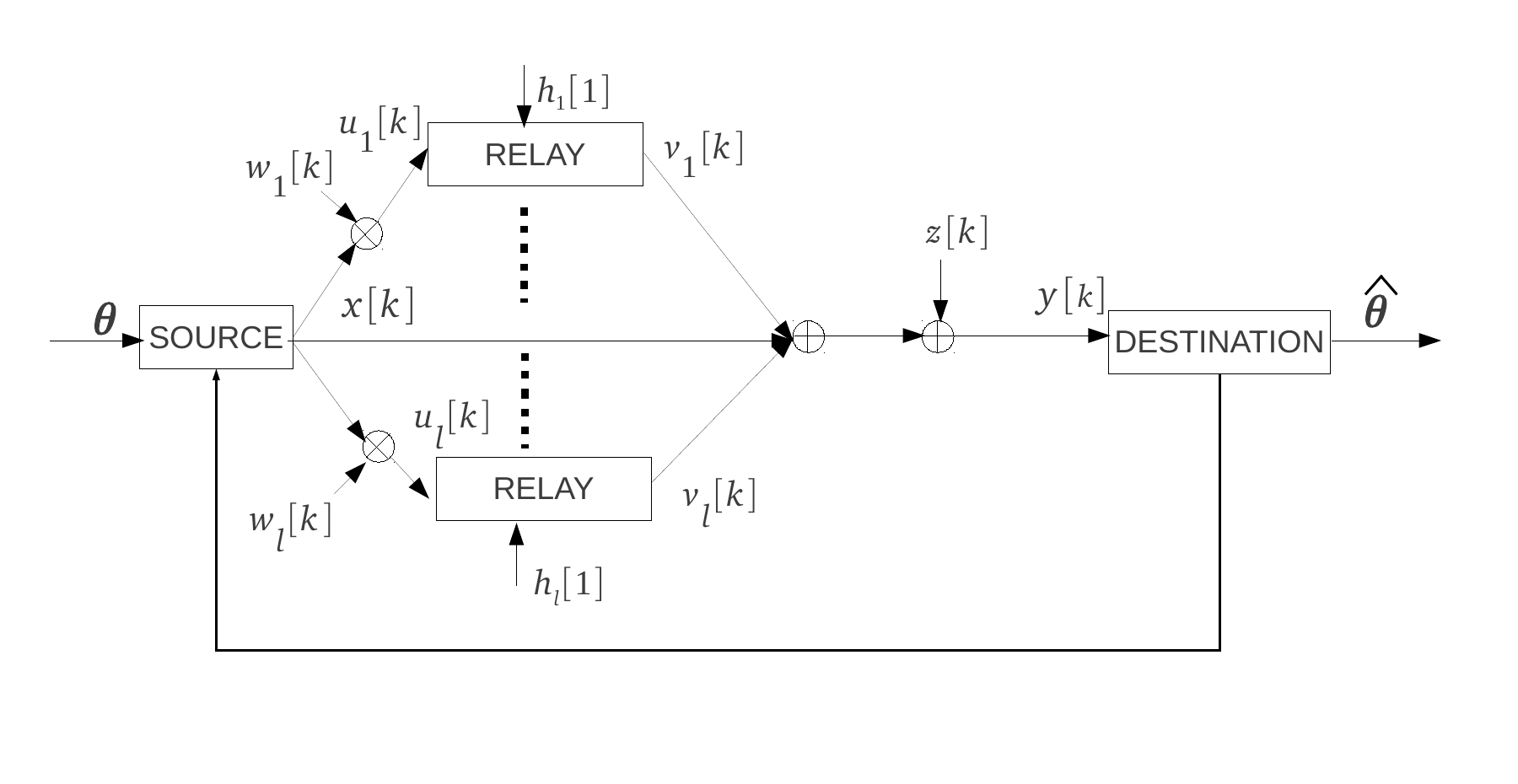}
\caption{System model for the extended relay model with $|\cV|$
amplify-and-forward relays in parallel each having gain $\alpha_i$.}
\label{ONF:parallel}
\end{figure} 

\subsection{Amplify-and-Forward Relays in Parallel}
Consider the parallel network as shown in Figure \ref{ONF:parallel}. In this
case we assume that the noise process $\{w_i[k]\}$ for the $i^{th}$ relay node
at time instant $k$ is white with $\cN(0, \sigma_i^2)$. It is assumed that each
of the noise processes is independent of the other. Furthermore, each relay node has a single
tap given by $h_i[1]$ with the power constraint
\be \nonumber \sum_{k = 1}^N E[v_i^2[k]] \leq
\gamma_i N\rho, \quad 1 \leq i \leq |\cV|.
\ee

Under this network configuration, a lower bound on the feedback capacity is
presented in the following lemma.

\begin{lemma}
{A lower bound on the maximum achievable rate
of the amplify-and-forward relay network with additive white Gaussian noise as
depicted in Figure \ref{ONF:parallel}} is given by $R_{\rm LB} = \sup_{\{h_i[1]\}_{i = 1}^{|\cV|}}(-\log \xi_0)$, where $\xi_0$ is the unique positive root of the quartic polynomial
\begin{equation*}
\frac{\rho}{1 + \sum_{i = 1}^{|\cV|} h_i^2[1]\sigma_i^2}\xi^2 = \frac{(1 -
\xi^2)}{(1 + |\sum_{i = 1}^{|\cV|} h_i[1]|\xi)^2},
\end{equation*}

with $h_i^2[1] \leq \gamma_i \frac{\rho}{\rho + \sigma_i^2}, 1 \leq i \leq
|\cV|, \left(\sum_{i = 1}^{|\cV|} h_i[1]\right)^2 \leq 1.$
\end{lemma}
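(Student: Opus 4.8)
The plan is to collapse the parallel network into an effective single-tap three-terminal relay and then invoke the AR(1) feedback-capacity result already established for that case. Since the source broadcasts the same symbol stream $x[k]$ to every relay, relay $i$ observes $u_i[k] = x[k] + w_i[k]$ and forwards $v_i[k] = h_i[1]\,u_i[k-1]$. Summing the relay outputs and the direct-link noise $z[k]$, I would first write the destination signal as
\begin{equation*}
y[k] = x[k] + \Big(\sum_{i=1}^{|\cV|} h_i[1]\Big)x[k-1] + \sum_{i=1}^{|\cV|} h_i[1]\,w_i[k-1] + z[k].
\end{equation*}
This is precisely the received signal of a single-tap relay whose effective forward tap is $h_{\rm eff} = \sum_{i=1}^{|\cV|} h_i[1]$, except that the aggregate relay-noise term $\sum_i h_i[1]w_i[k-1]$ is no longer of the scalar form $h_{\rm eff}\,w[k-1]$.

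Next I would apply the causal processing $\widetilde{\by} = \bH^{-1}\by$ of Section~\ref{sec:sm}, taking $\bH$ to be the lower-triangular Toeplitz matrix (see (\ref{def_H})) with ones on the diagonal and $h_{\rm eff}$ on the first subdiagonal. This yields $\widetilde{\by} = \bx + \widetilde{\bz}$, where the effective noise obeys the recursion
\begin{equation*}
\widetilde{z}[k] + h_{\rm eff}\,\widetilde{z}[k-1] = \sum_{i=1}^{|\cV|} h_i[1]\,w_i[k-1] + z[k] \stackrel{\Delta}{=} \eta[k].
\end{equation*}
The crux of the argument is to show that $\eta[k]$ is white. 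Because the $w_i$ and $z$ are mutually independent white processes and every relay forwards with the same unit delay, all lagged correlations $E[\eta[k]\eta[k-m]]$ with $m\neq 0$ vanish, while $E[\eta^2[k]] = 1 + \sum_i h_i^2[1]\sigma_i^2$. Hence $\{\widetilde{z}[k]\}$ is an AR(1) process with autoregressive parameter $h_{\rm eff}$ and innovation variance $\sigma_\eta^2 = 1 + \sum_i h_i^2[1]\sigma_i^2$; by Definition~2 it is stable exactly when $\big(\sum_i h_i[1]\big)^2 < 1$, which is the stated stability constraint.

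Having identified the effective channel as a point-to-point link with AR(1) noise, I would then specialize the single-tap AWGN result (the Corollary to Lemma~\ref{lemma_MA}, equivalently Kim's characterization~\cite{Ki10}) through the substitutions $|h[1]| \mapsto |\sum_i h_i[1]|$ and $1 + \sigma_w^2 h^2[1] \mapsto \sigma_\eta^2 = 1 + \sum_i h_i^2[1]\sigma_i^2$. This reproduces verbatim the quartic whose unique positive root $\xi_0$ furnishes $R_{\rm LB} = \sup(-\log\xi_0)$. Finally, I would recover the per-relay box constraint exactly as in the single-tap proof: from $v_i[k] = h_i[1](x[k-1] + w_i[k-1])$ with $x$ and $w_i$ independent, $E[v_i^2[k]] = h_i^2[1](\rho + \sigma_i^2)$, so the budget $E[v_i^2[k]] \leq \gamma_i\rho$ is equivalent to $h_i^2[1] \leq \gamma_i\rho/(\rho+\sigma_i^2)$.

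The step I expect to be the main obstacle is the whiteness verification for $\eta[k]$: it is the only place where the parallel topology genuinely enters, and it is what guarantees the effective noise collapses to a clean AR(1) model rather than the higher-order ARMA process that Proposition~\ref{prop1} would generically predict and that would not reduce to the single-tap quartic. Once whiteness is in hand, the remainder is parameter substitution into results already proved in the paper.
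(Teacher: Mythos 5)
Your proposal is correct and follows essentially the same route as the paper: collapse the parallel relays into a virtual point-to-point link whose effective noise satisfies $\widetilde{z}[k] + \bigl(\sum_i h_i[1]\bigr)\widetilde{z}[k-1] = \eta[k]$ with $\eta$ white of variance $1 + \sum_i h_i^2[1]\sigma_i^2$, then invoke the AR(1) feedback result with the substitutions $|h[1]| \mapsto |\sum_i h_i[1]|$ and $1+\sigma_w^2h^2[1] \mapsto 1+\sum_i h_i^2[1]\sigma_i^2$. The only difference is cosmetic: the paper asserts the whiteness directly by writing $E[\breve{\bz}\breve{\bz}^T] = \bigl(1 + \sum_i h_i^2[1]\sigma_i^2\bigr)\bI$, whereas you verify the vanishing lagged correlations explicitly, which is a worthwhile addition but not a different argument.
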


\begin{proof}
Under this parallel scheme of transmission,
the received signal at the destination is given by
\be
\nonumber
y[k] = x[k] + \left(\sum_{i = 1}^{|\cV|} h_i[1]\right)x[k - 1] + \sum_{i =
1}^{|\cV|} h_i[1] w_i[k - 1] + z[k].
\ee
In vector form, it can again be written as
\be \nonumber
\by = (\bI + \bH) \bx + \breve{\bz},
\ee
with 
\be \nonumber
\bH = \left[ \begin{array}{ccccccc}
0 & 0 & 0 & \ldots & \ldots & \ldots & \ldots\\
\sum_{i = 1}^{|\cV|} h_i[1] & 0 & 0 & \ldots & \ldots & \ldots & \ldots\\
\vdots & \vdots & \ddots & \ddots & \vdots & \vdots & \vdots \\
\vdots & \vdots & \ddots & \ddots & \vdots & \vdots & \vdots \\
0 & \ldots & 0 & 0 & \ldots & \sum_{i = 1}^{|\cV|} h_i[1] & 0\\
\end{array}\right],
\ee
and $\breve{\bz}$ is a Gaussian vector with zero mean and
$E[\breve{\bz}\breve{\bz}^T] = \left(1 + \sum_{i = 1}^{|\cV|}
h_i^2[1] \sigma_i^2\right)\bI$. After inversion, the effective noise vector
becomes $\widetilde{\bz} = (\bI + \bH)^{-1}\breve{\bz}$. Hence, the
effective noise process $\{\widetilde{z}[k]\}_{k = 1}^\infty$ satisfies 
\be
\nonumber \widetilde{z}[k] + \left(\sum_{i = 1}^{|\cV|} h_i[1]\right) \widetilde{z}[k - 1]
= \sqrt{\left(1 + \sum_{i = 1}^{|\cV|} h_i^2[1] \sigma_i^2\right)}\epsilon[k],
\ee
which is again an AR(1) process. Therefore a lower bound on the capacity of
the system in Figure \ref{ONF:parallel} is given by the above lemma.
\end{proof}
\begin{figure}
\centering
\includegraphics[scale = 0.45]{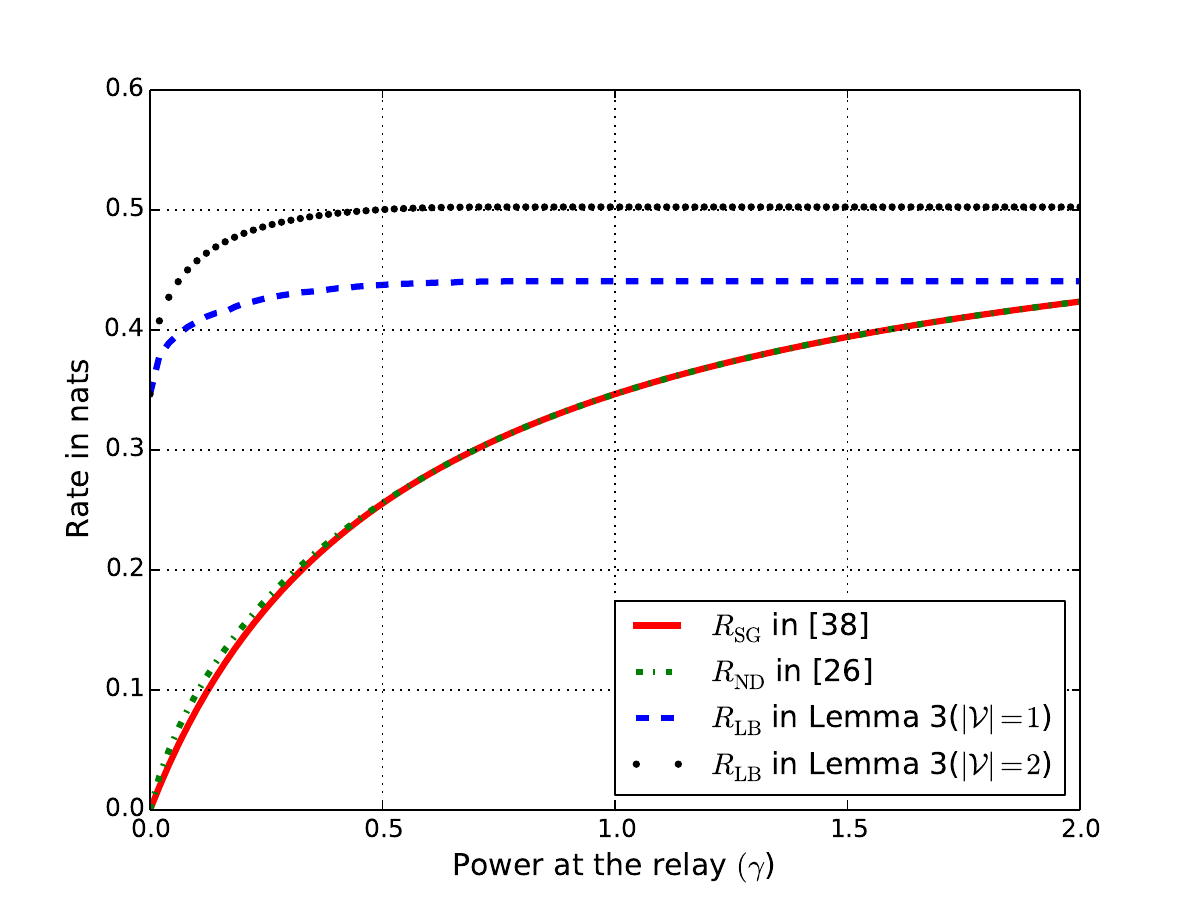}
\caption{Comparison of achievable rates of various schemes for symmetric AWGN
parallel relay channel. The parameters used for simulation were $\rho = 1,
\sigma_i^2 = 1,$ and $\gamma_i = \gamma, 1 \leq i \leq |\cV|$.}
\label{sim:parallel}
\end{figure}
{\color{black}In Figure \ref{sim:parallel}, we compare the rates for
symmetric AWGN parallel relay channel with $\rho = 1, \sigma_i^2 = 1$, and
$\gamma_i = \gamma$ for $1 \leq i \leq |\cV|$. The rates of the two schemes that
we compare against, $R_\mathrm{SG}$ and $R_\mathrm{ND}$, are both open
loop schemes with two relay nodes in parallel, and no direct link available
between the source and the destination. $R_\mathrm{SG}$ in \cite{Sc01} denotes
the maximum achievable rate using amplify-and-forward scheme, while
$R_\mathrm{ND}$ the lower bound on the capacity of AWGN parallel relay channel
with 2 nodes using linear relaying in \cite{NiDi13,XuKi14}. The expression for $R_\mathrm{ND}$ as
described in \cite{XuKi14} is given by
\begin{equation*}
R_\mathrm{ND} = \max_{0 \leq g \leq
1}gC\left(\frac{4\gamma(\frac{\rho}{g})^2}{1 +
\frac{\rho}{g} + 2\gamma\frac{\rho}{g}}\right)
\end{equation*}
where $C(x) = 1/2\log(1 + x).$ It can be seen from Figure \ref{sim:parallel}
that our proposed lower bound in Lemma 3 with both one relay node
(original three terminal relay channel) and two relay nodes perform much
better than the open-loop schemes available in the literature. The gains are substantial
especially when there is only a low power available at the relay
nodes.}

\begin{figure}
\centering
\includegraphics[scale = 0.20]{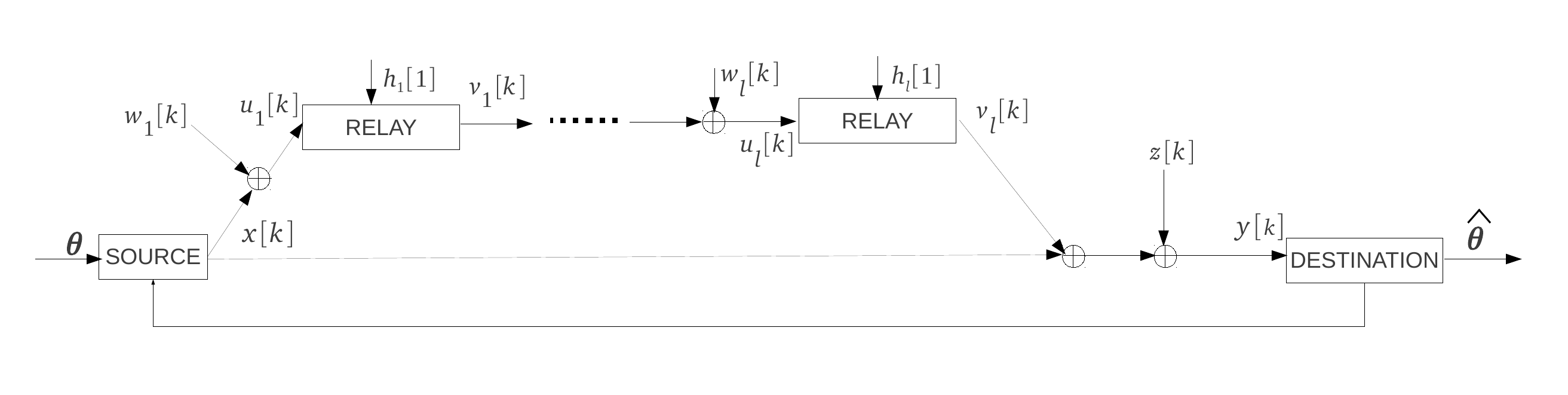}
\caption{System model for extended relay model with $\ell$ amplify-and-forward
relays in series each having gain $h_i[1]$.}
\label{ONF:series}
\end{figure} 
\subsection{Amplify-and-Forward Relays in Series}

In the series configuration, we consider the network as shown in Figure
\ref{ONF:series} with similar assumptions about the {noise
processes~(AWGN)} and power at the relays as used before. Under this setting,
the received signal at the destination is given by
\begin{align*}
y[k] & = x[k] + \left(\prod_{i = 1}^{|\cV|}h_i[1]
\right)x[k - |\cV|]\\
&  \qquad + \sum_{j = 1}^{|\cV|} \left(\prod_{i =
j}^{|\cV|}h_i[1] \right)w_j[k - (\ell + 1 - j)] + z[k].
\end{align*}

Hence the noise process, $\{\widetilde{z}[k]\}_{k = 1}^{\infty}$ can be
described as \be \nonumber
\widetilde{z}[k] + \left(\prod_{i = 1}^{|\cV|}h_i[1] \right)\widetilde{z}[k
- \ell] = \sqrt{\left(1 + \sum_{j = 1}^{|\cV|} \left(\prod_{i = j}^{|\cV|}
h_i^2[1]\right) \sigma_j^2\right)}\epsilon[k],
\ee
where the effective noise process in now just an AR($|\cV|$) process.
Furthermore, the power constraints at the relays can be upper bounded as
\be \label{series:eq1}
h_i^2[1] \leq \frac{\gamma_{i}\rho}{\gamma_{i - 1}\rho + \sigma_i^2}, \quad i
= 1, \ldots, |\cV|,
\ee
where $\gamma_0 = 1$. {To maintain the stability of the
effective noise process} we require that 
\be \label{series:eq2}
\prod_{i = 1}^{|\cV|}h_i^2[1] < 1.
\ee
Now, a lower bound on the capacity can be achieved by invoking Lemma
\ref{arn:lemma2} for the AR($|\cV|$) process with the filter tap coefficients
limited to the values as described in (\ref{series:eq1}) and (\ref{series:eq2}).

\section{Noisy Channel Output Feedback}\label{sec:NCOF_N_2}

\subsection{General Framework}
So far our analysis has been concentrated on the scenario where the feedback
link is noiseless, i.e., $\sigma_n^2 = 0$. In this section, we look at the scenario when
the channel output feedback is in fact noisy. {All the noises in
this section will be assumed to be additive white Gaussian in nature.} This
leads to the situation where the source no longer has exact knowledge about the state of decoding at the destination. We will develop a linear coding framework assuming that the source
and the destination can perform only linear operations. This is motivated by the
fact that for the noiseless case, the lower bound reported in the previous
section is achievable using linear encoding and decoding.

In the presence of noise, the metric that we are interested in optimizing
is the post-processed signal-to-noise ratio~(SNR) over the length of
transmission of a single symbol. Furthermore to obtain a non-zero achievable
rate, our proposed coding scheme can be used in concatenated fashion as outlined
in \cite{ZaDa11} for point-to-point communication. It can also be
incorporated in the network protocols using variants of automatic repeat
request~(ARQ).

\subsection{System Model and Example with $N = 2$}

We study a scheme for the case of two channel uses, i.e., $N = 2$. Hence the
relay node has only one filter tap, i.e., $L = 1$. In Figure \ref{fig:sm}, at
time instance $k = 1$, the signals at various nodes are given by
\begin{subequations}
\begin{align}
x[1] & = g_1 \theta\\
u[1] & = x[1] + w[1] =  g_1 \theta + w[1]\\
v[1] & = 0\quad \textrm{(no signal transmitted)}\\
y[1] & = x[1] + z[1] = g_1 \theta + z[1].
\end{align}
\end{subequations}

Due to the availability of a feedback link, the source has access to the
additional side information $y[1] + n[1]$ before transmitting $x[2]$. However,
this additional side-information is equivalent to having knowledge of $z[1] +
n[1]$. Therefore, at time instance $k = 2$, the signals transmitted at various nodes are
\begin{subequations}
\begin{align}
\!\!\!\!\!\!x[2] & = g_2 \theta + f_{21}\left(z[1] + n[1]\right)\\
\!\!\!\!\!\!u[2] & = 0 \quad \textrm{(no signal received)}\\
\!\!\!\!\!\!v[2] & = h_1u[1] = h_1(x[1] + w[1]) = h_1(g_1 \theta + w[1])\\
\nonumber \!\!\!\!\!\!\!\!y[2] & = x[2] + v[2] + z[2]\\
\!\!\!\!\!\!& = (g_2 + h_1g_1)\theta + f_{21}(z[1] +
n[1]) + h_1w[1] + z[2].
\end{align}
\end{subequations}
In vector form, the two transmissions can be combined into the form in
(\ref{eq:2T}).
\begin{figure*}
\begin{align}\label{eq:2T}
\nonumber
\left[\begin{array}{c}
y[1] \\
y[2] \end{array}\right] & = \left[\begin{array}{c}
g_1\\g_2 + h_1g_1\end{array}\right]\theta + \left[\begin{array}{cc}
1 & 0\\f_{21} & 1\end{array}\right]\left[\begin{array}{c}
z[1] \\ z[2] \end{array}\right] + \left[\begin{array}{cc}
0 & 0\\f_{21} & 0\end{array}\right]\left[\begin{array}{c}
n[1] \\ n[2] \end{array}\right] + \left[\begin{array}{cc}
0 & 0\\h_1 & 0\end{array}\right]\left[\begin{array}{c}
w[1] \\ w[2] \end{array}\right]\\
\by &= \bg \theta + (\bI + \bF)\bz + \bF \bn + \bB \bw.
\end{align}
\end{figure*}

Using a linear estimator at the destination, the estimate of $\theta$ is
given by $\widehat{\theta} = \br^T \by.$
Now the post-processed SNR at the destination is
\begin{equation}
\textrm{SNR} = \frac{|\br^T\bg|^2}{\br^T\bC \br},
\end{equation}
where $\bC = (\bI + \bF)(\bI + \bF)^T + \sigma_n^2\bF\bF^T + \sigma_w^2 \bB
\bB^T$. Using the Cauchy-Schwartz inequality, it is clear that the
post-processed SNR is maximized by choosing the optimal linear
estimator~\cite{Ka93}.
This reduces the post-processed SNR expression to $\bg^T
\bC^{-1}\bg.$
Furthermore in this development, we assume a per transmission power
constraint at the source is $E[x^2[k]] \leq \rho, \quad k = 1,2.$

We can now describe the overall optimization problem as
\begin{subequations}\label{opt1}
\begin{align}
\max _{g_1, g_2, f_{21}, h_1}& \quad \bg^T \bC^{-1} \bg \\
\nonumber \textrm{such that} \\
\label{c1} &g_1^2 \leq \rho\\
\label{c2}& g_2^2 + (1 + \sigma_n^2)f_{21}^2 \leq \rho\\
\label{c3}& h_1^2 \leq \frac{\gamma \rho}{\rho + \sigma_w^2}.
\end{align}
\end{subequations}
\noindent Substituting the values of $\bg$ and $\bC$ in the post-processed SNR
expression, we obtain

\begin{align}
\nonumber \bg^T \bC^{-1} \bg & = \frac{g_1^2\left(1 + (1 + \sigma_n^2)f_{21}^2 +
\sigma_w^2h_1^2\right) - 2g_1(h_1g_1 + g_2)f_{21}}{1 +
\sigma_n^2f_{21}^2+ \sigma_w^2h_1^2}\\
\nonumber & \qquad + \frac{(h_1g_1 + g_2)^2}{1 +
\sigma_n^2f_{21}^2+ \sigma_w^2h_1^2}\\
\label{SNR2_op1}& = g_1^2 + \frac{\left(g_1(h_1 - f_{21}) + g_2\right)^2}{{1 +
\sigma_n^2f_{21}^2+ \sigma_w^2h_1^2}}.
\end{align}
We perform the above optimization in two steps:
\begin{itemize}
  \item Optimization over $g_1$.
  \item Followed by joint optimization over $g_2, f_{21},$ and $h_1$.
\end{itemize}

\subsection{Optimization of Post-Processed SNR}

\subsubsection{Optimization over $g_1$}\label{opt_g1}
This optimization is trivial. It is obtained at the boundary point with $g_1 =
\sqrt{\rho}$ in (\ref{SNR2_op1}). Hence the $\textrm{SNR}$ is given by
\begin{equation}\label{SNR_g1_opt}
\textrm{SNR}  = \bg^T \bC^{-1} \bg = \rho + \frac{\left(\sqrt{\rho}(h_1 - f_{21}) +
g_2\right)^2}{{1 + \sigma_n^2f_{21}^2+ \sigma_w^2h_1^2}}.
\end{equation}

\subsubsection{Optimization over $g_2$, $f_{21}$ and $h_1$}\label{opt_rest}
To perform joint optimization over $g_2$, $f_{21}$ and $h_1$, we begin by
writing down the Karush-Kuhn-Tucker~(KKT) conditions~\cite{BoVa06} for the
optimal solution in (\ref{opt1}):
\begin{figure*}
\begin{subequations}\label{eq:KKT}
\begin{align}
\label{KKT1} -\frac{\sqrt{\rho}(h_1 - f_{21}) + g_2}{1 + \sigma_n^2f_{21}^2 +
\sigma_w^2 h_1^2} + \mu_2 g_2  = 0,\\
\label{KKT2} \frac{\left(1 + \sigma_n^2f_{21}^2 + \sigma_w^2
h_1^2\right)\sqrt{\rho}\left(\sqrt{\rho}(h_1 - f_{21}) + g_2\right) +
\left(\sqrt{\rho}(h_1 - f_{21}) + g_2\right)^2\sigma_n^2f_{21}}{\left(1 +
\sigma_n^2f_{21}^2 + \sigma_w^2 h_1^2\right)^2} + \mu_2(1 + \sigma_n^2)f_{21}  =
0,\\
 \label{KKT3} -\frac{\left(1 + \sigma_n^2f_{21}^2 + \sigma_w^2
h_1^2\right)\sqrt{\rho}\left(\sqrt{\rho}(h_1 - f_{21}) + g_2\right) -
\left(\sqrt{\rho}(h_1 - f_{21}) + g_2\right)^2\sigma_w^2h_1}{\left(1 +
\sigma_n^2f_{21}^2 + \sigma_w^2 h_1^2\right)^2} + \mu_3  = 0.
\end{align}
\end{subequations}
\end{figure*}
Note that $\mu_2$ and $\mu_3$ are non-negative KKT multipliers associated with
the constraints (\ref{c2}) and (\ref{c3}), respectively. Based on these KKT
conditions, we present the following lemma.

{\begin{lemma}\label{lemma_NR}
The post-processed SNR in (\ref{opt1}) is maximized when $g_2 > 0,
f_{21} < 0,$ and $h_1 > 0$.
\end{lemma}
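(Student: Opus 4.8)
The plan is to work from the already-reduced objective. By Section~\ref{opt_g1} the optimization over $g_1$ is settled at $g_1=\sqrt{\rho}>0$, so it remains to maximize the expression in~(\ref{SNR_g1_opt}), which I abbreviate as $J=A^2/B$ with $A=\sqrt{\rho}(h_1-f_{21})+g_2$ and $B=1+\sigma_n^2 f_{21}^2+\sigma_w^2 h_1^2>0$, subject to the coupled constraint~(\ref{c2}) and the relay constraint~(\ref{c3}). First I would record three structural facts. The feasible set is compact and $J$ is continuous, so a maximizer exists and the KKT system~(\ref{eq:KKT}) holds there. The optimum value is strictly positive, since $(g_2,f_{21},h_1)=(\sqrt{\rho},0,0)$ is feasible and gives $J=\rho$, so $A\neq 0$ at the optimum. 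Finally, both $J$ and all constraints are invariant under the joint flip $(g_2,f_{21},h_1)\mapsto(-g_2,-f_{21},-h_1)$, which sends $A\mapsto -A$; I use this to select the representative optimizer with $A>0$. Together with $g_1=\sqrt{\rho}>0$ this already gives the first sign claim.

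The signs of $g_2$ and $f_{21}$ I would then read directly off stationarity. Condition~(\ref{KKT1}) states $\mu_2 g_2 = A/B$, whose right-hand side is strictly positive; since $\mu_2\ge 0$ this forces $\mu_2>0$ and hence $g_2>0$ (and, by complementary slackness, that~(\ref{c2}) is active). Substituting $\mu_2>0$ into~(\ref{KKT2}) and isolating $f_{21}$ gives $\sqrt{\rho}\,A/B = -f_{21}\bigl(A^2\sigma_n^2/B^2+\mu_2(1+\sigma_n^2)\bigr)$; the left side is positive and the bracket is strictly positive, so $f_{21}<0$.

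For $h_1$ I would argue in two short steps. First, $h_1\neq 0$: if $h_1=0$ then the bound in~(\ref{c3}) is slack (it is strictly positive because $\gamma>0$), so complementary slackness gives $\mu_3=0$, and then~(\ref{KKT3}) collapses to $\sqrt{\rho}\,A/B=0$, contradicting $A>0$. Second, $h_1>0$: having fixed $g_2>0$, $f_{21}<0$, $A>0$, replacing $h_1$ by $-h_1$ leaves $B$ and both constraints unchanged but strictly increases $A$ (it adds $2\sqrt{\rho}\,|h_1|$), hence increases $J$; so $h_1<0$ cannot be optimal. Combining, $h_1>0$, which completes all four sign claims. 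Equivalently, one may rearrange~(\ref{KKT3}) into the same product form used for $f_{21}$ and read off $h_1>0$ at once.

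The hard part is the interdependence of $g_2$ and $f_{21}$ through the single power constraint~(\ref{c2}): because one cannot perturb $f_{21}$ while holding $g_2$ fixed on the active constraint, a naive coordinatewise derivative test does not settle the sign of $f_{21}$. This is exactly what the KKT stationarity conditions resolve, since the shared multiplier $\mu_2$ encodes the trade-off and makes the sign of $f_{21}$ fall out cleanly once $\mu_2>0$ is known. The remaining care is bookkeeping: justifying $A\neq 0$ and fixing the sign representative via the symmetry, and noting that the perturbation moving $h_1$ off zero is feasible precisely because $\gamma>0$ keeps the right-hand side of~(\ref{c3}) strictly positive.
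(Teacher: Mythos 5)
Your proposal is correct and follows essentially the same route as the paper's proof: both rest on the KKT system (\ref{eq:KKT}), with the same contradiction at zero values (each of $g_2=0$, $f_{21}=0$, and $h_1=0$ forces $\sqrt{\rho}(h_1-f_{21})+g_2=0$, which cannot hold at a maximizer since feasible points achieve a strictly larger SNR). The only substantive difference is rigor: where the paper simply asserts the sign pattern ``WLOG,'' you justify it --- via the flip symmetry $(g_2,f_{21},h_1)\mapsto(-g_2,-f_{21},-h_1)$, the sign reading of (\ref{KKT1})--(\ref{KKT2}) through the multiplier $\mu_2>0$, and the reflection argument ruling out $h_1<0$ --- which fills in exactly the step the paper leaves implicit.
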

\begin{proof}
Looking at the expression in (\ref{SNR_g1_opt}), we know that post processed
SNR is maximized  with $sign(h_1 - f_{21}) = sign(g_2)$. WLOG we can assume that
an optimal post-processed SNR will have $g_2$ and $h_1$ as non-negative values
while $f_{21}$ will have non-positive value. Also, note that in the event of any
one of them being zero, from KKT conditions in (\ref{eq:KKT}), we immediately
get $\left(\sqrt{\rho}(h_1 - f_{21}) + g_2\right) = 0$. However, this is a
minimizer of the SNR in (\ref{SNR2_op1}). Hence we conclude that the maximum post-processed SNR will
  have $g_2 > 0, f_{21} < 0$, and $h_1 > 0$.
\end{proof}
}
Using (\ref{KKT1}) and (\ref{KKT2}), we can express $g_2$ in terms of
$f_{21}$ and $h_1$ as 
\begin{equation}\label{eq:g2}
g_2 = -\frac{1}{\sqrt{\rho}}\left(\frac{\sigma_n^2\rho + (1 + \sigma_n^2)(1 +
\sigma_w^2 h_1^2)}{1 + \sigma_w^2h_1^2 + \sigma_n^2 f_{21}h_1}\right)f_{21}.
\end{equation}

\noindent Furthermore we observe that the second constraint (\ref{c2}) in the
optimization problem should always be satisfied with equality. If this was not true, we can
allocate the additional power to $g_2$ to boost the overall post-processed SNR
at the destination in (\ref{SNR_g1_opt}). Hence from (\ref{c2}), we have
\begin{equation}\label{eq:f_21}
\frac{1}{\rho}\left(\frac{\sigma_n^2\rho + (1 + \sigma_n^2)(1 +
\sigma_w^2 h_1^2)}{1 + \sigma_w^2h_1^2 + \sigma_n^2 f_{21}h_1}\right)^2f_{21}^2
+ (1 + \sigma_n^2)f_{21}^2 = \rho.
\end{equation}
Note that the left hand side of the above equation monotonically increases as we
decrease $f_{21}$, thereby implying a unique value of $f_{21}$ for a given
$h_1$. From the expression for $g_2$, we know that the region of interest is
limited to  $\{f_{21} : 1 + \sigma_w^2h_1^2 + \sigma_n^2 f_{21}h_1 > 0\}$.

Now, there are two possibilities, one in which all the power is used at the
relay, and the other in which only a fraction of it is used. {The
scenarios are:}
\begin{itemize}
  \item $h_1 < \sqrt{\frac{\gamma \rho}{\rho + \sigma_w^2}}:$ If the full power is not
  used at the relay, this would imply that $\mu_3 = 0$. Solving the condition in (\ref{KKT3}), we
immediately get
\begin{equation}
h_1 = \frac{\sqrt{\rho}\left(1 +
\sigma_n^2f_{21}^2 + \sigma_w^2 h_1^2\right)}{\sigma_w^2\left(\sqrt{\rho}(h_1 -
f_{21}) + g_2\right)} = \frac{\sqrt{\rho}}{\sigma_w^2}\frac{1}{\mu_2 g_2}.
\end{equation}
Equivalently, 
\begin{equation}\label{eq:h_1}
h_1 = \frac{\sqrt{\rho}\left(1 +
\sigma_n^2f_{21}^2\right)}{\sigma_w^2\left(g_2 - \sqrt{\rho}f_{21}\right)}.
\end{equation}
{Now using (\ref{eq:f_21}) and (\ref{eq:h_1}) in an iterative
manner, we can solve for an optimal value of $(g_2, f_{21}, h_1)$. Note however
that this might be a locally optimal point. To come up with the global optimal
point, we may need to scan through the complete line $h_1$.}

\item $h_1 = \sqrt{\frac{\gamma \rho}{\rho + \sigma_w^2}}:$ In this case, we simply
solve (\ref{eq:f_21}) and (\ref{eq:g2}) to obtain the optimal values of
$f_{21}$ and $g_2$.
\end{itemize}

Having solved both the cases, the best post-processed SNR is given by the
maximum of the above two cases.

\subsection{Special Cases of Transmission with $N = 2$}
In this subsection, we look at the solution form of the general optimization
problem for special cases.

\subsubsection{Noiseless Feedback ($\sigma_n^2 = 0$)}
In this case,
  (\ref{eq:g2}) and (\ref{eq:f_21}) simplify to
\begin{equation*}
g_2 = \sqrt{\frac{\rho}{1 + \rho}} \textrm{ and } f_{21} = -\frac{\rho}{\sqrt{1
+ \rho}}.
\end{equation*}
Also, the optimal gain at the relay node is given by
\begin{equation}
h_1 = \min \left(\sqrt{\frac{\gamma \rho}{\rho + \sigma_w^2}} , \frac{1}{\sigma_w^2
\sqrt{1 + \rho}}\right).
\end{equation}

\subsubsection{Very Noisy Feedback ($\sigma_n^2 \rightarrow \infty$)}
With such a noisy feedback link, the value of side-information is drastically
reduced, implying $f_{21} = 0.$ Now from (\ref{c2}) and (\ref{SNR_g1_opt}), we obtain
\begin{equation*}
g_2 = \sqrt{\rho} \textrm{ and } h_1 = \min \left(\sqrt{\frac{\gamma \rho}{\rho + \sigma_w^2}},
\frac{1}{\sigma_w^2}\right).
\end{equation*}

\subsubsection{Noiseless Source-to-Relay Link ($\sigma_w^2 = 0$)}
For noiseless link, we use all the power available at the relay, i.e., $h_1 =
\sqrt{\gamma}$. Solving (\ref{eq:f_21}), we obtain the value of the optimal
$f_{21}$. This is followed by calculation of the optimal $g_2$ in (\ref{eq:g2})
yielding
\begin{equation}
g_2 = -\frac{1}{\sqrt{\rho}}\left(\frac{1 + \sigma_n^2(1 + \rho)}{1 + \sigma_n^2
f_{21}\sqrt{\gamma}}\right)f_{21}.
\end{equation}

\begin{figure*}
\begin{minipage}[b]{.48\linewidth}
  \centering
  \centerline{\epsfig{figure=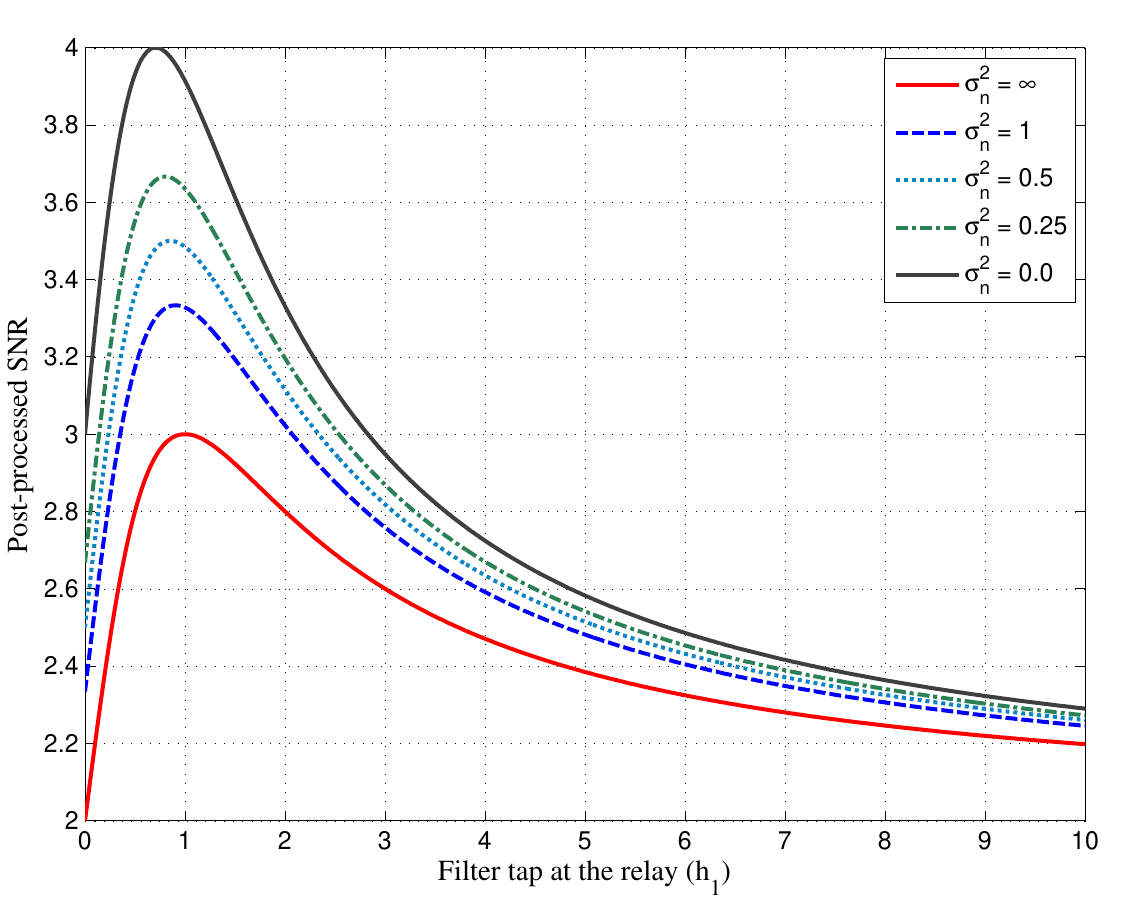,scale =
  0.45, angle = 0}}
  \centerline{\footnotesize (a)}\medskip
  \label{fig:noisy_1}
\end{minipage}
\hfill
\begin{minipage}[b]{0.48\linewidth}
  \centering
  \centerline{\epsfig{figure=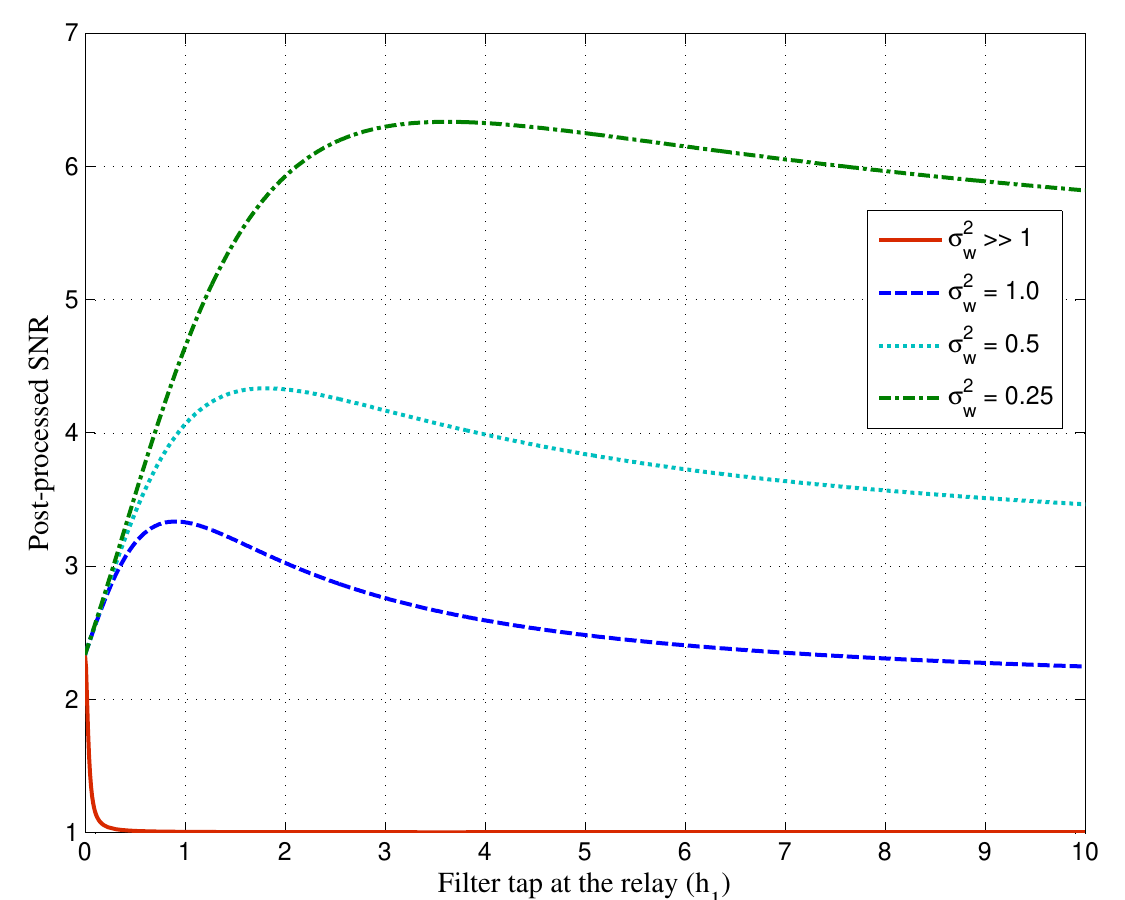,scale = 0.45, angle =
  0}} \vspace*{-0.2cm}
  \centerline{\footnotesize (b)}\medskip
  \label{fig:noisy_2}
\end{minipage}

\caption{(a) Plot of variation of post-processed SNR as a function of the noise
in the feedback link and the amplifier at the relay for $\rho = 1$ and
$\sigma_w^2 = 1.$ (b) Plot of variation of post-processed SNR as a function of the noise
in the source-to-relay link and the amplifier at the relay for $\rho = 1$ and
$\sigma_n^2 = 1$. }
\label{fig:noisy}
\end{figure*}
\subsubsection{Very Noisy Source-to-Relay Link ($\sigma_w^2 \rightarrow
\infty$)} In this case, the best strategy is to turn off the relay altogether, i.e., $h_1
= 0$. This then corresponds to the same scenario as analyzed in \cite{Bu69}. The
optimal parameters are now given by
\begin{align*}
g_2 & = \sqrt{\frac{\rho}{\left(1 + (1 + \rho)\sigma_n^2\right)^2 + \rho(1 +
\sigma_n^2)}}\left(\sigma_n^2 \rho + (1 + \sigma_n^2)\right) \\
f_{21} & = -\frac{\rho}{\sqrt{\left(1 + (1 + \rho)\sigma_n^2\right)^2 + \rho(1 +
\sigma_n^2)}}.\\
\end{align*}
%

Figures \ref{fig:noisy}(a) and \ref{fig:noisy}(b) plot the variation of the
post-processed SNR as a function of the gain at the relay $h_1$ for various feedback noise levels at
source-to-relay link and destination-source link. Figure \ref{fig:noisy}(a)
demonstrates that in the case of no output feedback at all, the
post-processed SNR is maximized at $h_1 = 1/\sigma_w^2 = 1$. With feedback, it
is seen that the maximum point shifts to the left which culminates in $h_1
= \frac{1}{\sigma_w^2\sqrt{1 + \rho}} = 0.707$ for the noiseless feedback case. We
see that improvements of up to $20\%$ is achievable even in the presence of
noisy feedback link.

Figure \ref{fig:noisy}(b) shows the impact on post-processed SNR as the noise
in the source-to-relay link is varied. For a very noisy source-to-relay link, it
is seen that the post-processed SNR is maximized by ignoring the relay node altogether.
As $\sigma_w^2$ decreases, the maximum is obtained {at a larger
filter tap coefficient} and in the limit that $\sigma_w^2 \rightarrow 0$, the
filter tap should be used to the maximum available power.

\section{Concluding Remarks}\label{sec:conc}
In this work we presented a lower bound on the capacity of the
three-terminal relay channel with the destination-to-source feedback. The bound
was obtained by drawing an equivalence between the three-terminal relay channel
and the single point-to-point communication link with feedforward noise
{having memory} of finite order. Using the recent results for
capacity of ARMA noise process in \cite{Ki10}, we derived improvements in the achievable rate for the relay
channel using very simple linear coding schemes at all the three terminals:
source, relay, and destination. While a tight lower bound for a general
ARMA$(p,q)$ effective noise process appears intractable, we demonstrated through numerical results the
advantages of using multiple taps at the relay node. We then extended the model to a
network of amplify-and-forward relays and proposed new lower bounds. 

We also explored the design of coding strategies that take noise in the
feedback link into consideration. {For the special case of two
transmissions}, $N = 2$, we proposed an optimal linear coding scheme that
maximizes the received SNR. This scheme can subsequently be used as an inner
code for a concatenated coding scheme that exploits channel output
feedback~\cite{ZaDa11}. Moreover, the successive refinement of the received
symbol at the destination automatically lends itself to be used in an ARQ
setting.

\bibliographystyle{IEEEtran}
\bibliography{Relay}
\begin{biography}[{\includegraphics[width=1in,height=1.25in,clip,keepaspectratio]{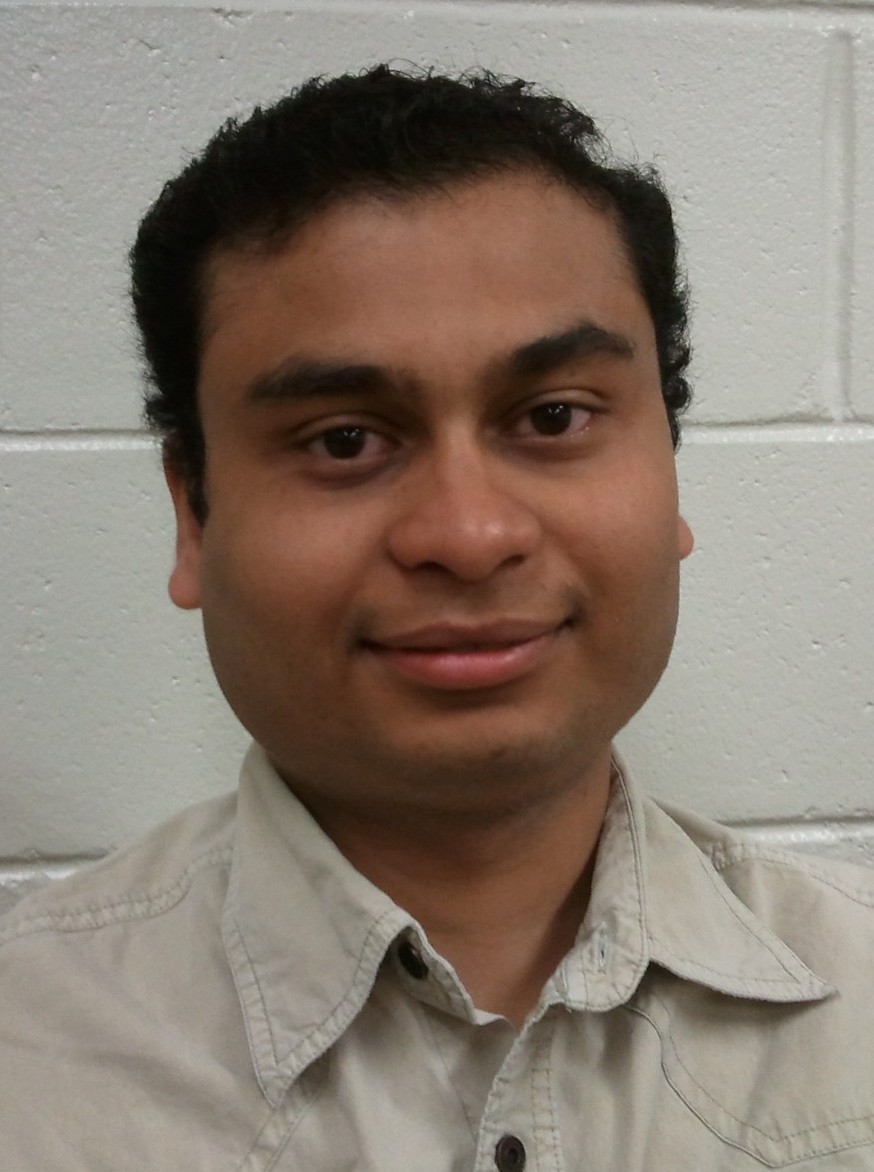}}]{Mayur
Agrawal} received the B. Tech degree in electronics and electrical communication
engineering in 2007 from Indian Institute of Technology Kharagpur, India.
Subsequently, in 2012, he completed his Ph.D. in electrical engineering at
Purdue University, West Lafayette, IN. Since November 2012, he has been with
WorldQuant Research (India) Private Limited. During the summers of 2008 and
2009, he was with Qualcomm Research Center, San Diego and Motorola iDEN Group,
Fort Lauderdale, respectively. He spent the summmer of 2011 as a Quantitative
Researcher with the Portfolio Construction Group at Citadel Investment Group,
Chicago. His research interests include the design of modern communication
systems, stochastic processes and time-series data. He has been a recipient of
the Magoon Award for Teaching Excellence in 2008 and 2009.
\end{biography}
\begin{biography}
[{\includegraphics[width=1in,height=1.25in,clip,keepaspectratio]{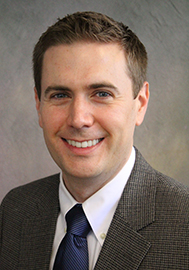}}]{David
J. Love} (S'98, M'05, SM'09, F'15)  received the B.S. (with highest honors),
M.S.E., and Ph.D. degrees in electrical engineering from the University of Texas
at Austin in 2000, 2002, and 2004, respectively. During the summers of 2000 and
2002, he was with Texas Instruments, Dallas, TX. Since August 2004, he has been
with the School of Electrical and Computer Engineering, Purdue University, West
Lafayette, IN, where he is now a Professor and recognized as a University
Faculty Scholar. He has served as an Editor for the IEEE Transactions on
Communications, an Associate Editor for the IEEE Transactions on Signal
Processing, and a guest editor for special issues of the IEEE Journal on
Selected Areas in Communications and the EURASIP Journal on Wireless
Communications and Networking. He is recognized as a Thomson Reuters Highly
Cited Researcher and holds 24 issued US patents.

Dr. Love is a Fellow of the Royal Statistical Society, and he has been inducted
into Tau Beta Pi and Eta Kappa Nu. Along with co-authors, he was awarded the
2009 IEEE Transactions on Vehicular Technology Jack Neubauer Memorial Award for
the best systems paper published in the IEEE Transactions on Vehicular
Technology in that year and multiple Globecom best paper awards. He was the
recipient of the Fall 2010 Purdue HKN Outstanding Teacher Award, Fall 2013
Purdue ECE Graduate Student Association Outstanding Faculty Award, and Spring
2015 Purdue HKN Outstanding Professor Award.
\end{biography}
\begin{biography}
[{\includegraphics[width=1in,height=1.25in,clip,keepaspectratio]{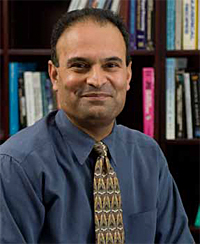}}]{
Venkataramanan Balakrishnan} (M'94, SM'06, F'12) received the B.Tech degree in
electronics and communication from the Indian Institute of Technology, Madras,
in 1985. He then attended Stanford University, where he received the M.S. degree
in statistics and the Ph.D. degree in electrical engineering in 1992.

Since 1994, Dr. Balakrishnan has served on the faculty of Electrical and
Computer Engineering at Purdue University, West Lafayette, Indiana, where he is
now Professor and Michael and Katherine Head. His primary research interests are
in applying numerical techniques, especially those based on convex optimization, to problems in
engineering. He is a co-author of the monograph Linear Matrix Inequalities in
System and Control Theory, published by SIAM, Philadelphia, in 1994.

Dr. Balakrishnan received the President of India Gold medal from the Indian
Institute of Technology, Madras, in 1985, the Young Investigator Award from the
Office of Naval Research in 1997, the Ruth and Joel Spira Outstanding Teacher
Award in 1998 and the Honeywell Award for excellence in teaching in 2001 from
the School of Electrical and Computer Engineering at Purdue University. He was
named a Purdue University Faculty Scholar in 2008.
\end{biography}
\end{document}